\documentclass[11pt]{article}
\usepackage[margin=1.5in]{geometry}
\usepackage{amsgen,amsthm,amsmath,amstext,amsbsy,amsopn,amssymb,latexsym,appendix}
\usepackage[usenames]{color}
\usepackage[dvipsnames]{xcolor}
\usepackage{array}
\usepackage{multirow}
\usepackage{booktabs}
\usepackage{diagbox}
\usepackage{graphicx}
\usepackage{listings}
\usepackage{threeparttable}
\usepackage{tabularx}
\usepackage{array}
\usepackage{subfigure}
\usepackage{float}
\usepackage[authoryear]{natbib}
\usepackage{indentfirst}
\usepackage[linesnumbered,ruled]{algorithm2e}
\usepackage[hidelinks]{hyperref} 
\usepackage{caption}
\usepackage{subcaption}
\usepackage[dvipsnames]{xcolor}
\usepackage{makecell}
\usepackage{enumitem}
\usepackage{authblk}
\usepackage{pdflscape}
\usepackage{fancyhdr}
\usepackage{tikz}

\fancypagestyle{lscape}{
    \fancyhf{}

    \fancyfoot[C]{
        \tikz[remember picture,overlay] 
        \node[xshift=-1.5cm, rotate=90] at (current page.east) {\thepage};
    }
}
\usepackage{geometry}

\allowdisplaybreaks

\definecolor{green}{RGB}{0,160,0}

\theoremstyle{definition}

\newtheorem*{Theorem*}{Theorem}
\newtheorem*{Assumption*}{Assumption}

\begin{document}
\title{Distributionally Robust Transfer Learning with Structurally Missing Covariates}

\author[1,2]{Siqi~Li}
\author[3,4]{Chuan~Hong}
\author[3]{Ziye~Tian}
\author[5]{Benjamin~Sieu-Hon~Leong}
\author[6]{Koshi~Nakagawa}
\author[7]{Hideharu~Tanaka}
\author[8]{Sang~Do~Shin}
\author[9]{Khuong~Quoc~Dai}
\author[10]{Do~Ngoc~Son}
\author[11,12,13]{Marcus~Eng~Hock~Ong}
\author[1,2,3,13,14]{Nan~Liu\textsuperscript{$\ddagger$}}
\author[15,16]{and~Molei~Liu\textsuperscript{$\ddagger$}}
\author[ ]{for~the~PAROS~Clinical~Research~Network}

\affil[1]{Centre for Biomedical Data Science, Duke-NUS Medical School, Singapore}
\affil[2]{Duke-NUS AI + Medical Sciences Initiative, Duke-NUS Medical School, Singapore}
\affil[3]{Department of Biostatistics and Bioinformatics, Duke University, Durham, NC, USA}
\affil[4]{Duke Clinical Research Institute, Durham, NC, USA}
\affil[5]{Emergency Medicine Department, National University Hospital, Singapore}
\affil[6]{Department of Sport and Medical Science, Faculty of Physical Education, Kokushikan University, Tokyo, Japan}
\affil[7]{Graduate School of Emergency Medical System, Kokushikan University, Tokyo, Japan}
\affil[8]{Department of Emergency Medicine, Seoul National University College of Medicine, Seoul, Republic of Korea}
\affil[9]{Center for Emergency Medicine, Bach Mai Hospital, Hanoi, Vietnam}
\affil[10]{Center for Critical Care Medicine, Bach Mai Hospital, Hanoi, Vietnam}
\affil[11]{Health Services Research Centre, Singapore Health Services, Singapore}
\affil[12]{Department of Emergency Medicine, Singapore General Hospital, Singapore}
\affil[13]{Pre-hospital \& Emergency Research Centre, Health Services Research and Population Health, Duke-NUS Medical School, Singapore}
\affil[14]{NUS Artificial Intelligence Institute, National University of Singapore, Singapore}
\affil[15]{Department of Biostatistics, Peking University Health Science Center, Peking University, Beijing, China}
\affil[16]{Beijing International Center for Mathematical Research, Peking University, Beijing, China}
\date{}
\maketitle

\makeatletter
\let\@oldthefootnote\thefootnote
\let\thefootnote\relax
\begin{NoHyper}
\footnotetext{$^\ddagger$Co-last authors. Correspondence should be addressed to Molei Liu: moleiliu@bjmu.edu.cn.}
\end{NoHyper}
\let\thefootnote\@oldthefootnote
\makeatother

\begin{abstract}
Clinical prediction models trained in well-resourced centers often rely on certain measurements that are unavailable at external deployment sites due to differences in data collection infrastructure. For example, models for predicting neurological outcomes after out-of-hospital cardiac arrest (OHCA) trained in the US Resuscitation Outcomes Consortium use detailed prehospital care variables that are not recorded in many international registries. 
Standard approaches either discard these variables, sacrificing predictive information, or impute them under assumptions that cannot be verified without labeled outcome data at the deployment site.
We propose DRUM (\underline{D}istributionally \underline{R}obust \underline{U}nsupervised transfer learning with structurally \underline{M}issing covariates), a framework that transfers prediction models to new populations where certain covariates are entirely absent and outcome labels are unavailable. 
Rather than assuming a specific distribution for the missing variables, DRUM optimizes predictive performance against worst-case realizations of how these variables might behave across deployment populations, with a bias correction procedure based on Neyman orthogonality.
Applied to cross-national OHCA prediction, transferring models from the US registry to registries in Singapore, Japan, and Korea where prehospital variables are not recorded, DRUM reduced the expected calibration error by at least 45\% relative to the best-performing baseline across all three sites, producing well-calibrated risk predictions without requiring outcome data or prehospital measurements at deployment.
More broadly, the framework enables models developed in data-rich centers to be deployed reliably where key predictive variables are unavailable, advancing equitable clinical prediction across resource-limited populations.

\vspace{0.2cm}
\noindent\textbf{Keywords:} Unsupervised transfer learning, Distributionally robust optimization, Structural missingness, Generative models, Neyman orthogonality, Model generalizability.
\end{abstract}

\section{Introduction}
\label{sec.intro}
Clinical prediction models developed in high-resource research settings~\citep{wahl2018artificial} are increasingly deployed across diverse, multi-regional healthcare systems, yet ensuring their generalizability remains a fundamental challenge~\citep{de2022guidelines}. Transfer learning (TL)~\citep{li2025bridging}\footnote{Sometimes referred to as domain adaptation in the machine learning literature; we use the term transfer learning throughout this paper.}, which leverages knowledge from models trained in one setting to improve prediction in new populations, has emerged as a promising framework for adapting clinical models across heterogeneous healthcare environments~\citep{li2025leveraging}.

\begin{figure}[t]
    \centering
    \includegraphics[width=\linewidth]{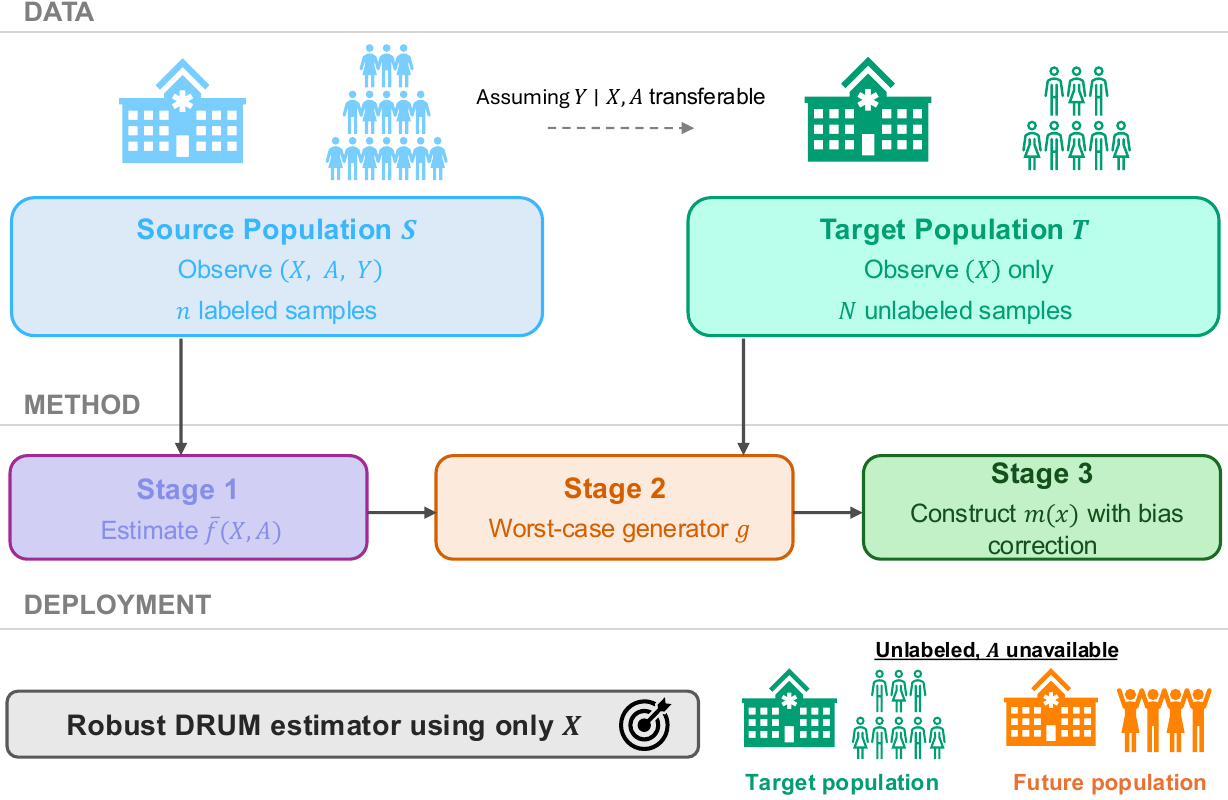}
    \caption{Overview of the DRUM framework. Source data with complete observations $(X, A, Y)$ are used to estimate the conditional mean $\bar{f}$, while unlabeled target data with only $X$ observed are used to train the worst-case generator $g$. The DRUM estimator $m(x)$ uses only covariates $X$.}
    \label{fig:framework}
\end{figure}

A common assumption underlying such knowledge transfer is that the conditional relationship between commonly observed covariates and the outcome, $Y \mid X$, remains stable across populations, so that a model trained on the source can generalize to new settings where the same covariates $X$ are observed. However, in many practical deployments, the source model is based on a richer set of covariates $(X, A)$ where certain variables $A$ that are highly predictive and are routinely recorded in the source can be completely absent in target settings due to differences in clinical infrastructure or data collection protocols. 
We refer to such variables as \emph{structurally missing covariates}: unlike standard missing data where individual observations are incomplete, structural missingness reflects a population-level absence that is itself informative, since $A$ carries outcome-relevant information not recoverable from $X$ and its absence typically coincides with a shift in $A \mid X$.

When $A$ is structurally missing in the target, prediction must rely on $\mathbb{E}[Y \mid X]$ alone, which requires integrating over the unknown distribution of $A \mid X$ in the new population. If this distribution differs from the source, as is likely when clinical practices vary across regions, directly applying the source model produces biased and miscalibrated predictions. Adding to this challenge, labeled outcome data are often unavailable, delayed, or too limited in the target population due to resource constraints or limited follow-up infrastructure, preventing retraining or supervised adaptation.

This challenge arises concretely in predicting neurological outcomes after out-of-hospital cardiac arrest (OHCA). High-performing models trained on the US Resuscitation Outcomes Consortium (ROC) registry~\citep{roc_epistry_v3} rely on detailed prehospital care measurements such as total epinephrine dose and arterial blood pH. When transferring these models to the Pan-Asian Resuscitation Outcomes Study (PAROS) network~\citep{ong2011pan}, which spans registries across diverse healthcare systems, these prehospital variables are structurally absent across all sites, and the underlying prehospital care patterns vary substantially due to different emergency response protocols and resource levels. The study populations and data structure are described in Section~\ref{sec:data}.

The OHCA setting reflects a broader methodological challenge: transfer learning when certain covariates available in the source are structurally absent in the target, and the conditional distribution $A \mid X$ in the target differs from the source in unknown ways.
Existing approaches fall short: imputation methods require assumptions about the distribution of $A$ in the target domain that cannot be verified without target labels, while covariate shift corrections only address distributional differences in observed variables and cannot recover information about covariates that are entirely absent. 
This motivates a framework that accounts for uncertainty in the unobserved $A \mid X$ without relying on untestable distributional assumptions.

\subsection{Related Work}

\paragraph{Unsupervised transfer learning.}
Following~\citet{weiss2016survey}, we operate in the \emph{unsupervised transfer learning} paradigm, characterized by abundant labeled source data but no labeled target data. Classical methods typically rely on importance reweighting \citep{shimodaira2000improving, sugiyama2007covariate}, which corrects for distributional mismatch by weighting source samples by the density ratio $dQ_X / dP_X$, with nonparametric extensions through kernel mean matching~\citep{huang2006correcting} or classifier-based estimation~\citep{bickel2009discriminative}. Representation-based approaches instead learn domain-invariant features by minimizing distributional discrepancy in a shared embedding space, for example, through maximum mean discrepancy~\citep{long2015learning} or adversarial domain classification~\citep{ganin2016domain}. 
Generative approaches such as CyCADA~\citep{hoffman2018cycada} leverage unlabeled target data through pixel- and feature-level adversarial alignment.
Despite their methodological differences, all of these methods assume a shared covariate space between the source and target. They cannot handle the case where certain outcome-relevant covariates are entirely absent in the target domain, as occurs in our setting.

\paragraph{Missing data and blockwise missingness.}
A parallel literature addresses missing covariate data through \emph{blockwise missingness}, where entire groups of variables are absent for certain data sources during multi-source integration~\citep{little2019statistical}. 
Methods in this area include imputation-based approaches~\citep{xue2021integrating}, structured-sparsity approaches that partition samples by missingness pattern~\citep{yuan2012multi, xiang2014bi}, covariance-based optimal prediction~\citep{yu2020optimal}, modular regression for multi-modal covariates~\citep{jin2023modular}, and adaptive learning under distributional shift~\citep{li2025adaptivelearningblockwisemissing}. 
Despite their differences, these methods share a common strategy which leverages sources where a variable block is observed to recover information for sources where it is absent. 
This literature diverges from our context in two critical ways. First, these methods target estimation at a primary source or across pooled datasets, rather than prediction in an entirely unlabeled target domain. Second, they share an implicit assumption that the conditional distribution $A \mid X$ remains stable across sources, so that information learned where $A$ is observed transfers directly to settings where it is absent. This assumption cannot be verified without target labels and is unlikely to hold in real-world heterogeneous data.

\paragraph{Distributionally robust optimization.}
To avoid relying on these untestable assumptions about the stability of $A \mid X$, distributionally robust optimization (DRO) offers a principled alternative by providing worst-case guarantees over a specified uncertainty set of distributions.
For an outcome $Y$ and a full set of covariates $Z$, the standard DRO formulation seeks a predictor $m(\cdot)$ that solves the minimax problem:
$ \min_{m(\cdot)} \max_{\mathbb{P} \in \mathcal{U}} \mathbb{E}_{(Z, Y) \sim \mathbb{P}} \left[ \ell(Y, m(Z)) \right]$
where $\mathcal{U}$ is an uncertainty set around the training distribution, typically constrained via divergence constraints such as $f$-divergence~\citep{hu2013kullback, duchi2021learning} or Wasserstein distance~\citep{esfahani2017datadrivendistributionallyrobustoptimization, blanchet2017quantifyingdistributionalmodelrisk}. 
This paradigm has been applied broadly, including to classification~\citep{sagawa2020distributionallyrobustneuralnetworks} and algorithmic fairness~\citep{hashimoto2018fairnessdemographicsrepeatedloss, li2025ROME}.

However, conventional DRO formulations construct uncertainty sets over the \emph{joint} distribution of all variables, treating the covariate space $Z$ as a single undifferentiated block without distinguishing between covariates that remain reliably available at deployment ($X$) and those that are structurally absent ($A$). 
In clinical deployment across heterogeneous healthcare systems, this joint-distribution approach is overly conservative and poorly targeted: the primary source of uncertainty is not the marginal distribution of the stable variables $P(X)$, but rather the unknown conditional distribution $A \mid X$ of the missing covariates.

\subsection{Contributions}
We propose DRUM (\underline{D}istributionally \underline{R}obust \underline{U}nsupervised transfer learning with structurally \underline{M}issing covariates), a framework for transferring prediction models to unlabeled target populations where certain covariates are structurally absent. 
Rather than imputing $A$ under untestable distributional assumptions, DRUM learns a prediction function that optimizes the worst-case predictive performance over an uncertainty set on the target distribution of $A \mid X$. 
Our specific contributions are as follows.

First, we develop a DRO framework that isolates uncertainty to the conditional distribution $A \mid X$ using an energy distance-based uncertainty set, governed by a single robustness parameter $\delta$. 
By parameterizing this conditional distribution through a generative model, we reduce an otherwise intractable infinite-dimensional robust optimization problem into a finite-dimensional neural network objective solvable via gradient-based training. 
This parameterization offers substantial flexibility: the framework avoids parametric assumptions on either the outcome mechanism or the conditional distribution of the missing covariates and can be combined with any sufficiently flexible and differentiable function class.

Second, to ensure reliable finite-sample performance, we derive a bias-correction procedure based on Neyman-orthogonal estimation and cross-fitting. This procedure formally reduces the estimator's sensitivity to first-order biases arising from nuisance function estimation. Extensive simulation studies show that this correction yields substantial improvements in mean and worst-case prediction errors over uncorrected DRUM and standard baselines under varying degrees of distribution shift.

Third, we demonstrate DRUM on a cross-national OHCA outcome prediction task under genuine structural missingness. Using the US-ROC registry as the labeled source data, we deploy DRUM across three distinct Asian PAROS registries (Singapore, Japan, Korea) where prehospital care variables are absent from the data collection infrastructure. 
DRUM produces consistently well-calibrated predictions across all deployment sites, reducing expected calibration error (ECE) by at least 45\% relative to the best-performing baseline at each site while maintaining competitive Brier scores.

\subsection{Organization of the Article}
The remainder of the paper is organized as follows. Section~\ref{sec:data} describes the challenge of cross-national OHCA prediction and the study populations. Section~\ref{sec:methods} formulates the DRO framework and derives the dual-form solution. Section~\ref{sec:estimation} presents the computational approach, including generator parameterization and bias correction. Section~\ref{sec:simulations} presents simulation studies. Section~\ref{sec.realdata} reports the results of applying DRUM to cross-national cardiac arrest outcome prediction. Additional experimental details and supplementary results are provided in the Appendix.

\section{Data Description}
\label{sec:data}
\subsection{Out-of-Hospital Cardiac Arrest}
\label{sec:data-ohca}

Prediction of neurological outcomes after out-of-hospital cardiac arrest (OHCA) presents precisely this challenge. OHCA prediction models are used to guide resource allocation, for example by identifying patients most likely to benefit from aggressive interventions such as mechanical circulatory support or early coronary angiography~\citep{nolan2019cardiac}, and well-calibrated predicted probabilities are essential for clinical deployment. 
Two features of the OHCA setting are relevant. 
First, the conditional relationship between clinical state at presentation and neurological outcome is plausibly stable across populations~\citep{li2025leveraging}, satisfying the Conditional Stability assumption underlying our framework (Section~\ref{sec:methods}). 
Second, the data infrastructure supporting OHCA registries differs sharply across regions: variables routinely captured in high-resource research registries, such as total epinephrine dose and arterial blood pH, are inconsistently recorded or entirely absent in many international registries, producing the structural missingness our framework addresses.

\begin{table}[!htbp]
\centering
\caption{Baseline characteristics of the OHCA study populations. Continuous variables are presented as mean (SD) or median [IQR]; categorical variables as $n$ (\%). The prehospital care variables ($A$) were available only in the US-ROC source data.}
\label{tab:baseline}
\small
\setlength{\tabcolsep}{4pt}
\begin{tabular}{llcccc}
\toprule
& & \textbf{Source} & \textbf{Target} & \multicolumn{2}{c}{\textbf{External}} \\
\cmidrule(lr){3-3} \cmidrule(lr){4-4} \cmidrule(lr){5-6}
& & US-ROC & Singapore & Japan & Korea \\
& & \textit{n\,=\,9,755} & \textit{n\,=\,5,933} & \textit{n\,=\,64,262} & \textit{n\,=\,8,484} \\
\midrule
\multicolumn{6}{l}{\textit{Stable covariates ($X$)}} \\[2pt]
Age, mean (SD) & & 63.8 (16.0) & 65.2 (18.4) & 72.4 (18.3) & 64.6 (19.1) \\[2pt]
Sex, $n$ (\%) & Female & 3,427 (35.1) & 2,072 (34.9) & 27,460 (42.7) & 3,079 (36.3) \\
& Male & 6,328 (64.9) & 3,861 (65.1) & 36,802 (57.3) & 5,405 (63.7) \\[2pt]
Rhythm, $n$ (\%) & Non-shock. & 5,519 (56.6) & 4,901 (82.6) & 59,795 (93.0) & 7,284 (85.9) \\
& Shockable & 4,236 (43.4) & 1,032 (17.4) & 4,467 (7.0) & 1,200 (14.1) \\[2pt]
Witness/CPR, $n$ (\%) & No/No & 1,869 (19.2) & 1,354 (22.8) & 21,706 (33.8) & 2,498 (29.4) \\
& No/Yes & 1,503 (15.4) & 1,057 (17.8) & 15,829 (24.6) & 1,535 (18.1) \\
& Yes/No & 2,351 (24.1) & 1,475 (24.9) & 11,530 (17.9) & 1,862 (21.9) \\
& Yes/Yes & 4,032 (41.3) & 2,047 (34.5) & 15,197 (23.6) & 2,589 (30.5) \\[4pt]
\multicolumn{6}{l}{\textit{Missing covariates ($A$, source only)}} \\[2pt]
Response time (min)  & & 5.40 [4.15, 6.90] & --- & --- & --- \\
Epinephrine dose (mg) & & 2.00 [1.00, 3.00] & --- & --- & --- \\
Blood pH & & 7.18 [7.02, 7.29] & --- & --- & --- \\[4pt]
\multicolumn{6}{l}{\textit{Outcome ($Y$)}} \\[2pt]
Neuro outcome, $n$ (\%) & Good & 1,885 (19.3) & 183 (3.1) & 2,217 (3.4) & 354 (4.2) \\
& Poor & 7,870 (80.7) & 5,750 (96.9) & 62,045 (96.6) & 8,130 (95.8) \\
\bottomrule
\end{tabular}
\end{table}

\paragraph{Source Data}
We use the Resuscitation Outcomes Consortium (ROC) Cardiac Epidemiologic Registry~\citep{roc_epistry_v3} as source data ($n = 9{,}755$). 
ROC was a North American clinical research network funded by the National Heart, Lung, and Blood Institute, designed to standardize prehospital data capture across participating EMS agencies.
As a consequence, the ROC registry provides granular prehospital care variables mandated by its standardized data dictionary.
The outcome is a favorable neurological status at hospital discharge, defined as Cerebral Performance Category 1 or 2 ($Y$, binary). Patient-level covariates ($X$) include age, sex, initial cardiac rhythm (shockable vs. non-shockable), and witness/bystander CPR status. The prehospital care variables are denoted $A$.

\paragraph{Target and External Data}
To emulate a realistic deployment scenario in which only stable covariates are available, we treat the Singapore OHCA cohort from the Pan-Asian Resuscitation Outcomes Study (PAROS) registry~\citep{ong2011pan} ($N = 5{,}933$) as unlabeled target data, using only $X$ for generator training. PAROS is a collaborative network of Asia-Pacific EMS systems whose data dictionary is leaner than ROC's: across participating sites, prehospital care variables such as drug dosing and blood gases are inconsistently captured or absent altogether~\citep{doctor2017pan}.
Although outcome labels are available in PAROS for retrospective evaluation, DRUM does not use target labels in model fitting, generator estimation, or bias correction.

External validation is performed in PAROS cohorts from Japan ($n = 64{,}262$) and Korea ($n = 8{,}484$), in which the prehospital variables $A$ are also unavailable. The three Asian cohorts also exhibit substantial heterogeneity in patient characteristics and outcomes relative to the source: the prevalence of favorable neurological outcome is $19.3\%$ in the US-ROC source population but only $3.1$--$4.2\%$ in the three PAROS cohorts (Table~\ref{tab:baseline}), reflecting differences in bystander response rates, EMS dispatch times, and post-arrest care pathways. 
This prevalence shift is consistent with the Conditional Stability Assumption, as it can arise from differences in the distribution of $(X, A)$ rather than changes in the outcome mechanism. Together, structural missingness in $A$, covariate shift in $X$, and substantial prevalence shift make the OHCA setting a stringent test of our transfer learning methodology.

\section{Methods}
\label{sec:methods}

\subsection{Problem Setup}
We consider a transfer learning problem where a prediction model will be deployed in a new target population using only covariates $X$ that are consistently observed in all settings, while certain covariates $A$ are structurally missing in the target. 
Let $\mathcal{D}_{\mathcal{S}}=\{(X_i,A_i,Y_i)\}_{i=1}^n$ denote labeled source data from population $\mathcal{S}$ and let $\mathcal{D}_{\mathcal{T}}=\{X_i\}_{i=n+1}^{n+N}$ denote unlabeled target data from population $\mathcal{T}$ where only $X$ is observed.
We make the following assumption on the stability of the outcome mechanism.

\begin{Assumption*}[Conditional Stability]
\label{ass:stability}
The conditional distribution of outcome is identical in source and target populations: $\mathbb{P}^{\mathcal{S}}_{Y \mid X, A} = \mathbb{P}^{\mathcal{T}}_{Y \mid X, A}$.
\end{Assumption*}
The marginal distribution of $X$ can shift from $\mathbb{P}_X$ (source) to $\mathbb{Q}_X$ (target), and the distribution of $A$, marginal or conditional on $X$, can also shift. Since $A$ is structurally absent in the target, conditional stability cannot be empirically verified at deployment, analogous to the covariate shift assumption in standard transfer learning~\citep{shimodaira2000improving}. Under this assumption, the only remaining source of unidentifiability is the unknown target distribution of $A \mid X$.

Our goal is to learn a predictor $m: \mathbb{R}^{d_X} \to \mathbb{R}$ that uses only $X$ and remains reliable under distributional differences between source and target. A standard empirical risk minimization (ERM) approach would estimate $\mathbb{E}[Y \mid X]$ directly from the source data, implicitly assuming that the conditional distribution $A \mid X$ in the target matches the source. When this assumption fails, the ERM predictor inherits a systematic bias that cannot be diagnosed without target labels. 
To guard against this failure mode, we adopt a distributionally robust optimization (DRO) formulation that optimizes over a set of plausible target distributions of $A \mid X$, ensuring that the resulting predictor performs well even under worst-case realizations of the missing covariates:
\begin{equation}
\max_{m(\cdot)}\min_{\mathbb{P}_{A\mid X} \in \mathcal{C}(\delta)} \mathbb{E}_{X \sim \mathbb{Q}_X, A \sim \mathbb{P}_{A\mid X}, Y \sim \mathbb{P}_{Y \mid X, A}^{\mathcal{S}}} \big[Y^2 - (Y - m(X))^2 \big],
\label{eq:objective}
\end{equation}
where $\mathcal{C}(\delta)$ is an uncertainty set of conditional distributions of $A \mid X$ controlled by a radius parameter $\delta \geq 0$, defined below. 
The objective is a squared-error-reduction criterion: $Y^2 - (Y - m(X))^2$ measures the improvement in squared prediction error achieved by $m(X)$ relative to the null predictor $m_0(x) = 0$. The zero reference is a deliberate modeling choice: it measures each predictor's utility against making no prediction, so that the worst-case objective favors predictors whose improvement over the null is robust across adverse target distributions of the missing covariates.

The uncertainty set $\mathcal{C}(\delta)$ is centered at the source conditional distribution $\mathbb{P}^{\mathcal{S}}_{A \mid X}$ and contains all conditional distributions whose deviation from the source, measured by the energy distance (defined formally in Section~\ref{sec:estimation}), does not exceed $\delta$. This construction reflects the premise that the source conditional provides a reasonable but potentially imperfect reference for the unknown target $A \mid X$. The parameter $\delta$ controls the degree of robustness: when $\delta = 0$, the uncertainty set contains only the source conditional, recovering standard transfer learning without robustness; for $\delta > 0$, it expands to include conditional distributions within a bounded neighborhood of the source, protecting against perturbations in the conditional relationship between missing and observed covariates. The choice of $\delta$ thus interpolates between trusting the source distribution and allowing progressively larger departures from it. Details of the computational implementation are given in Section~\ref{sec:estimation}.

\subsection{Dual-Form DRO Solution}
\label{sec2.2}
The minimax problem in~\eqref{eq:objective} optimizes jointly over the predictor $m(\cdot)$ and the worst-case conditional distribution $\mathbb{P}_{A\mid X}$.
Under standard regularity conditions permitting the exchange of the maximization and minimization, the following result reduces the problem to a search for a least-favorable conditional distribution of the missing covariates.
\begin{Theorem*}[Dual form of the robust predictor]
\label{thm:dual}
Under the Conditional Stability Assumption, with $\bar{f}(x,a) = \mathbb{E}[Y \mid X=x, A=a]$ bounded, the minimax problem~\eqref{eq:objective} admits the equivalent form
\begin{equation}
\min_{\mathbb{P}_{A\mid X} \in \mathcal{C}(\delta)}
\ \mathbb{E}_{X \sim \mathbb{Q}_X}
\left\{ \mathbb{E}_{A \sim \mathbb{P}_{A\mid X}}
\big[\bar{f}(X, A)\big] \right\}^2,
\label{eq:reduced}
\end{equation}
and for any conditional distribution $\mathbb{P}_{A\mid X}$ the optimal predictor is
\begin{equation}
m^*_{\mathbb{P}}(x)
= \mathbb{E}_{A \sim \mathbb{P}_{A\mid X=x}}\big[\bar{f}(x, A)\big].
\label{eq:m-optimal}
\end{equation}
Writing $\mathbb{P}^*_{A\mid X}$ for a minimizer of~\eqref{eq:reduced} whenever one exists, the robust prediction function is $m^*(x) = \mathbb{E}_{A \sim
\mathbb{P}^*_{A\mid X=x}}[\bar{f}(x, A)]$.
\end{Theorem*}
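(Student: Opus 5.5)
The plan is to exchange the max and min in~\eqref{eq:objective}, as permitted by the regularity conditions in the preamble to the statement. This turns the problem into an inner maximization over $m(\cdot)$ for each fixed $\mathbb{P}_{A\mid X}\in\mathcal{C}(\delta)$, which can be solved in closed form. The value of that inner problem is then the objective~\eqref{eq:reduced} to be minimized over $\mathcal{C}(\delta)$.

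\textbf{Step 1: inner maximization.} Fix $\mathbb{P}_{A\mid X}$ and expand the integrand:
\[
Y^2-(Y-m(X))^2 = 2Ym(X)-m(X)^2 .
\]
Conditioning on $(X,A)$ and using the Conditional Stability Assumption gives $\mathbb{E}[Y\mid X,A]=\bar f(X,A)$ under the mixed law $X\sim\mathbb{Q}_X$, $A\sim\mathbb{P}_{A\mid X}$, $Y\sim\mathbb{P}^{\mathcal{S}}_{Y\mid X,A}$. Boundedness of $\bar f$ justifies the tower property and integrability. Then condition on $X$ alone and set $\mu_{\mathbb{P}}(x)=\mathbb{E}_{A\sim\mathbb{P}_{A\mid X=x}}[\bar f(x,A)]$. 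The objective becomes
\[
\mathbb{E}_{\mathbb{Q}_X}\bigl[2\mu_{\mathbb{P}}(X)m(X)-m(X)^2\bigr]
=\mathbb{E}_{\mathbb{Q}_X}\bigl[\mu_{\mathbb{P}}(X)^2\bigr]-\mathbb{E}_{\mathbb{Q}_X}\bigl[(m(X)-\mu_{\mathbb{P}}(X))^2\bigr].
\]
The second term is nonpositive and vanishes at $m=\mu_{\mathbb{P}}$. So the pointwise maximizer is $m^*_{\mathbb{P}}=\mu_{\mathbb{P}}$, which is~\eqref{eq:m-optimal}. It is unique $\mathbb{Q}_X$-a.e. and lies in $L^2(\mathbb{Q}_X)$ by boundedness. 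The maximal value is $\mathbb{E}_{\mathbb{Q}_X}\{\mu_{\mathbb{P}}(X)\}^2$.

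\textbf{Step 2: the outer problem and the robust predictor.} After the exchange, \eqref{eq:objective} equals $\min_{\mathbb{P}\in\mathcal{C}(\delta)}\max_m(\cdot)$. By Step 1 this is exactly~\eqref{eq:reduced}. If $\mathbb{P}^*$ attains the minimum, I need $(m^*_{\mathbb{P}^*},\mathbb{P}^*)$ to be a saddle point, so that $m^*$ solves the original max-min. One direction is immediate: $m^*_{\mathbb{P}^*}$ is a best response to $\mathbb{P}^*$. For the other direction, consider the objective
\[
L(m,\mathbb{P})=\mathbb{E}_{\mathbb{Q}_X}\bigl[2\mu_{\mathbb{P}}m-m^2\bigr].
\]
It is concave in $m$. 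It is affine in $\mathbb{P}$, because $\mu_{\mathbb{P}}$ is linear in the conditional law. Hence Sion's minimax theorem yields the exchange and the saddle property whenever $\mathcal{C}(\delta)$ is convex and suitably compact.

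\textbf{Main obstacle.} The main obstacle is the exchange of max and min itself. I would invoke Sion's theorem using three facts. First, the energy-distance ball $\mathcal{C}(\delta)$ is convex, since squared energy distance is a convex function of the distribution. Second, that ball is weakly compact under boundedness or tightness of $A$. Third, $L$ is continuous in $\mathbb{P}$ for bounded continuous $\bar f$. The statement explicitly assumes these regularity conditions rather than proving them, so in the proof I would only state them and verify the algebra of Steps 1 and 2.
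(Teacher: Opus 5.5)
Your proposal is correct and follows the paper's own derivation essentially step for step. You expand $Y^2-(Y-m)^2=2Ym-m^2$, use Conditional Stability to replace $Y$ by $\bar f$, exchange max and min via Sion's theorem (concave in $m$, affine in $\mathbb{P}_{A\mid X}$, convex $\mathcal{C}(\delta)$), and solve the inner problem pointwise to get $m^*_{\mathbb{P}}=\mathbb{E}_{A\sim\mathbb{P}_{A\mid X}}[\bar f(X,A)]$ with value $\mathbb{E}_{\mathbb{Q}_X}\{m^*_{\mathbb{P}}(X)\}^2$; the only difference is that you complete the square where the paper sets a derivative to zero. You are in fact more careful than the paper, which never mentions the compactness and semicontinuity hypotheses of Sion's theorem. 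Note also that once a minimizer $\mathbb{P}^*$ is assumed to exist, convexity of $\mathcal{C}(\delta)$ alone gives the saddle property, because the first-order condition for $\mathbb{P}^*$ along mixtures, $\mathbb{E}_{\mathbb{Q}_X}[m^*(X)\{\mu_{\mathbb{P}}(X)-m^*(X)\}]\ge 0$ for all $\mathbb{P}\in\mathcal{C}(\delta)$, is exactly $L(m^*,\mathbb{P})\ge L(m^*,\mathbb{P}^*)$, so no weak-compactness claim about the ball of conditional laws is needed.
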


Equation~\eqref{eq:reduced} has a direct interpretation: the robust predictor is the conditional mean of $Y$ under the least favorable conditional distribution of the structurally missing covariates, and the worst case is the one that drives this predicted signal closest to the null. The derivation is given in Appendix~\ref{app:dual-derivation}.

\subsection{The Energy-Distance Uncertainty Set}
\label{sec:pop-uncertainty}

It remains to specify the uncertainty set $\mathcal{C}(\delta)$ over which the robust formulation~\eqref{eq:objective} optimizes. Because $\mathbb{E}[\bar{f}(x,A)]$ depends on the full conditional law of $A \mid X$ whenever $\bar{f}$ is nonlinear in $a$, we require a discrepancy that separates distributions rather than their means, and that is convenient to enforce as a constraint during estimation. 

The energy distance~\citep{szekely2013energy} meets both needs. 
For distributions $P$ and $Q$ on $\mathbb{R}^{d_A}$, let $A, A' \stackrel{\text{iid}}{\sim} P$ and $\tilde{A}, \tilde{A}' \stackrel{\text{iid}}{\sim} Q$ be mutually independent draws. 
The (squared) energy distance is
\begin{equation}
\mathcal{E}(P, Q)
= 2\,\mathbb{E}\|A - \tilde{A}\|_2
- \mathbb{E}\|A - A'\|_2
- \mathbb{E}\|\tilde{A} - \tilde{A}'\|_2,
\label{eq:energy-distance}
\end{equation}
where $\|\cdot\|_2$ is the Euclidean norm. 
It contrasts cross-distribution dispersion (first term) against each distribution's internal spread (subtracted terms), giving $\mathcal{E}(P, Q) \geq 0$ with equality if and only if $P = Q$. This zero-iff-equal property ensures $\mathcal{C}(\delta)$ reflects genuine distributional closeness, while each term is estimated by a sample average that is differentiable in the parameters of a generative model and hence convenient to enforce during estimation (Section~\ref{sec:estimation}).

We apply~\eqref{eq:energy-distance} conditionally, aggregating over the source covariate distribution:
\begin{equation}
D\big(\mathbb{P}_{A\mid X}, \mathbb{P}^{\mathcal{S}}_{A\mid X}\big)
= \mathbb{E}_{X \sim \mathbb{P}^{\mathcal{S}}_X}
\Big[\, \mathcal{E}\big(\mathbb{P}_{A\mid X},\,
\mathbb{P}^{\mathcal{S}}_{A\mid X}\big) \,\Big],
\qquad
\mathcal{C}(\delta)
= \Big\{ \mathbb{P}_{A\mid X} :
D\big(\mathbb{P}_{A\mid X}, \mathbb{P}^{\mathcal{S}}_{A\mid X}\big)
\leq \delta \Big\}.
\label{eq:pop-uncertainty-set}
\end{equation}
The set is centered at the source conditional $\mathbb{P}^{\mathcal{S}}_{A\mid X}$, a reference for the unidentified target conditional. The radius $\delta \geq 0$ controls robustness: at $\delta = 0$ the
set collapses to the source conditional, recovering transfer without robustness, and larger $\delta$ admits progressively larger departures of $A \mid X$ from the source.

\section{Estimation and Computation}
\label{sec:estimation}

\subsection{Generator-Based Parameterization}
\label{sec:generator}

Direct optimization of~\eqref{eq:reduced} over the infinite-dimensional space of
conditional distributions $\mathbb{P}_{A\mid X}$ is intractable, and the robust
predictor $m^*(x) = \mathbb{E}_{A \sim \mathbb{P}_{A\mid X}}[\bar{f}(x, A)]$
depends on the full conditional law of $A \mid X$, not merely its mean, whenever
$\bar{f}$ is nonlinear in $a$. We therefore parameterize $A \mid X$ through a
generative model that maps latent noise to the space of $A$, enabling Monte
Carlo approximation of $\mathbb{E}[\bar{f}(x, A)]$ for arbitrary $\bar{f}$.

Parameterizing the generator as a neural network offers considerable modeling flexibility: neural networks can approximate arbitrarily complex conditional distributions on compact domains~\citep{hornik1989multilayer}, and the resulting finite-dimensional optimization over network parameters is solvable via standard gradient-based training. 
Specifically, we introduce the generator $g_\phi: \mathbb{R}^{d_X} \times \mathbb{R}^q \to \mathbb{R}^{d_A}$ with parameters $\phi$ to model the conditional distribution $A \mid X$:
\begin{equation}
A = g_\phi(X, \epsilon), \quad \epsilon \stackrel{\text{iid}}{\sim} \mathcal{N}(0, I_q),
\label{eq:generator-local}
\end{equation}
where the latent dimension $q$ controls the expressiveness of the generator and is treated as a tuning parameter.

Substituting~\eqref{eq:generator-local} into~\eqref{eq:reduced}, the optimization becomes:
\begin{equation}
\min_{g_\phi \in \mathcal{C}_g(\delta)} \mathbb{E}_{X \sim \mathbb{Q}_X} \left[ \mathbb{E}_{\epsilon \stackrel{\text{iid}}{\sim} \mathcal{N}(0,I_q)}[\bar{f}(X, g_\phi(X, \epsilon))] \right]^2.
\label{eq:generator-obj}
\end{equation}
The robust prediction function is $m^*(x) = \mathbb{E}_\epsilon[\bar{f}(x, g^*(x, \epsilon))]$ once an optimal generator is obtained.

\subsection{Empirical Energy Constraint}
\label{sec:energy}

Section~\ref{sec:pop-uncertainty} defined the uncertainty set $\mathcal{C}(\delta)$ through the population energy distance between conditional distributions. We now give its empirical counterpart over the generator class, which turns the distributional constraint into a differentiable penalty on the network parameters $\phi$.

For a generator $g(x, \epsilon)$, the energy score $En(g)$ measures the fit of $g$ to the source conditional $\mathbb{P}^{\mathcal{S}}_{A\mid X}$:
\begin{equation}
En(g) = \mathbb{E}_{(X,A) \sim \mathbb{P}^{\mathcal{S}},\, \epsilon \stackrel{\text{iid}}{\sim}
\mathcal{N}(0,I_q)}\|A - g(X,\epsilon)\|_2 - \frac{1}{2}\mathbb{E}_{X \sim
\mathbb{P}^{\mathcal{S}}_X,\, \epsilon,\epsilon' \stackrel{\text{iid}}{\sim} \mathcal{N}(0,I_q)}
\|g(X,\epsilon) - g(X,\epsilon')\|_2,
\label{eq:energy-score}
\end{equation}
where $(X, A)$ are drawn from the source. The first term is the expected distance between generated and observed values of $A$; the second term is the expected distance between independent generated samples, which prevents the generator from collapsing to a point estimate. Minimizing~\eqref{eq:energy-score} over the generator class yields the best source fit, $g^{\mathcal{S}} = \arg\min_{g \in \mathcal{G}} En(g)$; we estimate it following~\citet{shen2024engression}.

The score and the population distance of Section~\ref{sec:pop-uncertainty} are two views of the same object. 
The excess score $En(g) - En(g^{\mathcal{S}})$ removes the term common to all generators and is, up to a constant factor, an empirical counterpart of the energy distance
$D(\mathbb{P}^{g}_{A\mid X}, \mathbb{P}^{\mathcal{S}}_{A\mid X})$ 
between the generator's conditional law and the source. 
It therefore serves as the sample analogue of the population radius, and we define the generator-based uncertainty set as
\begin{equation}
\mathcal{C}_g(\delta) = \big\{g \in \mathcal{G} : En(g) - En(g^{\mathcal{S}})
\leq \delta \big\},
\label{eq:uncertainty-set}
\end{equation}
the empirical realization of $\mathcal{C}(\delta)$ in~\eqref{eq:pop-uncertainty-set} with the proportionality constant absorbed into the choice of $\delta$.
The excess score is amenable to gradient-based optimization in the generator parameters, enabling enforcement of the constraint via primal-dual gradient methods (Section~\ref{sec:three-stage}).

\subsection{Estimation Procedure}
\label{sec:three-stage}
The dual-form solution $m^*(x) = \mathbb{E}_{A \sim \mathbb{P}^*_{A\mid X}}[\bar{f}(x, A)]$ of Section~\ref{sec2.2} decomposes into two estimands, the conditional mean $\bar{f}$ and the worst-case generator, and the estimation proceeds in three stages that recover them in turn.
\textit{Stage 1: Conditional mean estimation.}
Train a neural network $f_\psi(x, a)$ on source data $\mathcal{D}_{\mathcal{S}}$ to approximate $\bar{f}(x,a) = \mathbb{E}[Y \mid X=x, A=a]$ by minimizing the mean squared error:
\begin{equation}
\hat{\psi} = \arg\min_{\psi} \frac{1}{n} \sum_{i=1}^{n} (Y_i - f_{\psi}(X_i, A_i))^2.
\label{eq:stage1}
\end{equation}
For binary outcomes, the Stage~1 training loss is replaced by binary cross-entropy; the subsequent stages apply the squared objective to the estimated probabilities and remain unchanged.

\textit{Stage 2: Worst-case generator training.}
With $f_{\hat{\psi}}$ fixed, train the generator by minimizing the empirical version of~\eqref{eq:generator-obj}. We first learn the source conditional generator $\hat{g}^{\mathcal{S}}$ by minimizing the empirical energy
score~\eqref{eq:energy-score} on source data, then solve the energy-constrained problem via the primal-dual method:
\begin{equation}
\mathcal{L}(\phi, \lambda) = \frac{1}{N}\sum_{j=1}^N \left( \frac{1}{L}
\sum_{l=1}^L f_{\hat{\psi}}(X_j, g_\phi(X_j, \epsilon_{jl})) \right)^2
+ \lambda \big(\Delta En(g_\phi, \hat{g}^{\mathcal{S}}) - \delta\big),
\label{eq:lagrangian}
\end{equation}
where $\{\epsilon_{jl}\}_{l=1}^L \stackrel{\text{iid}}{\sim} \mathcal{N}(0, I_q)$ are Monte Carlo samples, $\{X_j\}_{j=1}^N$ are target covariates, $\lambda \geq 0$ is the dual variable, and $\Delta En(g_\phi, \hat{g}^{\mathcal{S}}) = En(g_\phi) - En(\hat{g}^{\mathcal{S}})$ is the energy gap. 
The optimization alternates between updating $\phi$ to minimize~\eqref{eq:lagrangian} and updating $\lambda$ by dual gradient ascent, $\lambda \leftarrow \max\big(0, \lambda + \eta_\lambda(\Delta En - \delta)\big)$. We write $g_{\hat{\phi}}$ for the trained generator at convergence, which Stage~3 uses to form the predictor.

\textit{Stage 3: Robust prediction.}
Given the trained generator $g_{\hat{\phi}}$, the robust predictor averages $f_{\hat{\psi}}$ over Monte Carlo draws of the generated covariates:
\begin{equation}
\hat{m}(x) = \frac{1}{L} \sum_{l=1}^L 
f_{\hat{\psi}}\big(x, g_{\hat{\phi}}(x, \epsilon_l)\big),
\qquad \{\epsilon_l\}_{l=1}^L \stackrel{\text{iid}}{\sim} \mathcal{N}(0, I_q),
\label{eq:prediction}
\end{equation}
which depends only on $X$ and can therefore be deployed in the target population where $A$ is unobserved.

\subsection{Bias Correction}
\label{sec.biascorrection}
Throughout this subsection we abbreviate the Stage-1 estimate $f_{\hat\psi}$ as $\hat{f}$ and the Stage-2 generator $g_{\hat\phi}$ as $\hat{g}$ (the preliminary generator).
The plug-in estimator $\hat{m}(x) = \frac{1}{L}\sum_l \hat{f}(x, \hat{g}(x, \epsilon_l))$ inherits the estimation error of $\hat{f}$, which is estimated in the higher-dimensional $(x, a)$ space. However the target function $m(x) = \mathbb{E}_A[\bar{f}(x, A)]$ is a function of $x$ alone and is therefore a lower-complexity object than $\bar{f}(x,a)$. The plug-in estimator does not exploit this reduction because it carries the full $(x,a)$-dimensional error of $\hat{f}$ into a prediction that ultimately depends only on $x$.

This issue arises generally when flexible models such as neural networks are used to estimate nuisance functions in a higher-dimensional space than the target parameter requires. In our setting it is compounded because $\hat{f}$ is used both to train the generator and to form the final prediction, so estimation errors accumulate systematically rather than canceling upon averaging. The worst-case optimization further amplifies this sensitivity by steering the generator toward regions where $\hat{f}$ produces extreme predictions.

Motivated by semiparametric efficiency theory~\citep{chernozhukov2016double, kennedy2023towards}, we construct a bias-corrected estimator targeting the lower-complexity object $m(x)$ rather than $\bar{f}(x,a)$. The key idea is to augment the plug-in prediction with a correction term that cancels the first-order contribution of $\bar{f}$ estimation error to the objective. Specifically, we form the pseudo-outcome
\begin{equation}
F_i = S_i \cdot \hat{\omega}(X_i, A_i)\big[Y_i - \hat{f}(X_i, A_i)\big] + r(1 - S_i) \cdot \mathbb{E}_{\epsilon}\big[\hat{f}(X_i, \hat{g}^{\mathrm{deb}}(X_i, \epsilon))\big],
\label{eq:F-pseudo}
\end{equation}
where $S_i = 1$ for source observations, $S_i = 0$ for target observations, and $r = n/N$. The first term reweights source residuals $[Y_i - \hat{f}(X_i, A_i)]$ by the density ratio $\hat{\omega}(x,a)$ between the generator-induced and source distributions of $(X, A)$, correcting for the systematic bias introduced by using $\hat{f}$ in place of $\bar{f}$; $\hat{\omega}$ is estimated by probabilistic classification. 
The second term is the plug-in prediction on target data, formed using the fold-specific generator $\hat{g}^{\mathrm{deb}}$ obtained from the debiasing procedure described in Appendix~\ref{app:debiased-details}.

All nuisance functions ($\hat{f}$, $\hat{\omega}$, $\hat{g}^{\mathrm{deb}}$) are estimated by three-fold cross-fitting to avoid overfitting bias. The construction of $\hat{g}^{\mathrm{deb}}$ and $\hat{\omega}$, and the full fold-cycling procedure, are given in Appendix~\ref{app:debiased-details} and Algorithm~\ref{alg:proposed-Debiased}.
\begin{figure}[t]
    \centering
    \includegraphics[width=\linewidth]{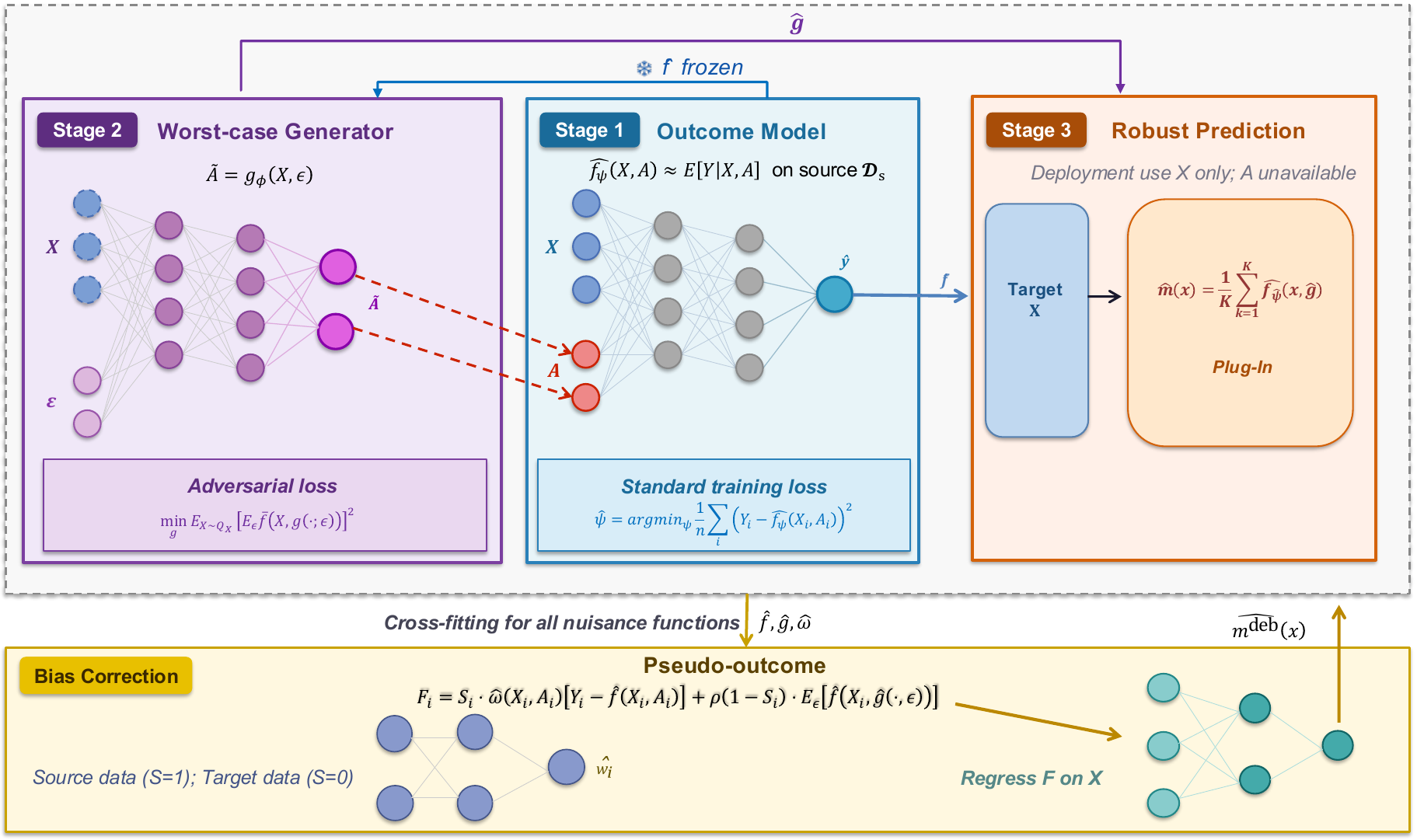}
    \caption{Estimation procedure for DRUM. Stage 1: a neural network $f_{\hat\psi}(X, A)$ is trained on labeled source data to estimate $\bar{f}(x,a)$. 
    Stage 2: with $\hat{f}$ fixed, a worst-case generator is trained on unlabeled target covariates. 
    Stage 3: the robust predictor $\hat{m}(x)$ averages $\hat{f}$ over $L$ Monte Carlo draws, producing predictions that use only $X$ at deployment. A bias correction stage regresses pseudo-outcomes $F_i$ on $X$ to reduce sensitivity to estimation error in $\hat{f}$.}
    \label{fig:drum}
\end{figure}
The bias-corrected estimator $\hat{m}^{\mathrm{deb}}(x)$ is obtained by regressing the pseudo-outcomes $F$ on $X$ using a neural network. 
The pseudo-outcome construction yields a population moment that is Neyman-orthogonal with respect to $\bar{f}$~\citep{chernozhukov2016double, kennedy2023towards} (Appendix~\ref{app:neyman}), treating the generator and density ratio as fixed, so that first-order perturbations of $\bar{f}$ do not affect the moment.
Figure~\ref{fig:drum} illustrates the full estimation procedure. Complete algorithms are provided in Algorithms~\ref{alg:local}--\ref{alg:proposed-Debiased} in Appendix~\ref{app:pseudocode}.

\section{Simulations}
\label{sec:simulations}
We evaluate DRUM through simulation studies designed to assess predictive robustness under varying degrees of conditional shift in $A \mid X$. 
We report two variants: DRUM (plug-in), the Stage-3 estimator of~\eqref{eq:prediction}, and DRUM, the bias-corrected estimator of Section~\ref{sec.biascorrection}; unless qualified, ``DRUM'' refers to the bias-corrected version.

All experiments compare DRUM against 10 baselines grouped into three categories. All methods use neural networks of comparable architecture for the final $Y \sim X$ prediction model, with hyperparameter-tuning details provided in Appendix~\ref{app:simulation-details}.

\begin{itemize}[leftmargin=*, itemsep=1pt]
    \item \textit{Baselines without target data:} Baseline-ERM, a neural network regressing $Y$ on $X$ alone via empirical risk minimization using source data only; and Baseline-DRO, the distributionally robust optimization method of \citet{duchi2021learning} applied to the same $Y \sim X$ regression.
    \item \textit{Importance weighting baselines:} IW-KMM, which estimates target/source density ratio weights $\hat{w}(x) = p_{\mathcal{T}}(x)/p_{\mathcal{S}}(x)$ via kernel mean matching \citep{huang2006correcting} using unlabeled target $X$, then trains a weighted $Y \sim X$ regression on source data; and IW-Classify, which estimates the same density ratio via a logistic regression classifier trained to distinguish source from target observations \citep{shimodaira2000improving, bickel2009discriminative}.
    \item \textit{Pseudo-label baselines:} PL-Mean, PL-MICE, and PL-MissForest, which treat the target $Y$ as missing, impute it using source $Y \mid X$ relationships (via mean imputation, MICE, or MissForest respectively), pool the imputed target data with source data, and train $Y \sim X$ models on the completed dataset. Each imputation method is combined with both ERM and DRO training, yielding six pseudo-label variants, for 10 baselines in total across the three categories.
\end{itemize}

The importance weighting and pseudo-label baselines each use unlabeled target $X$ but no target $Y$, matching the information available to DRUM. The key distinction is that all baselines model $Y \mid X$ directly, either by reweighting source observations to match the target covariate distribution or by augmenting the training set with imputed pseudo-labels. Neither approach accounts for the influence of the unobserved $A$ on $Y$: importance weighting corrects for shifts in $P(X)$ but assumes $P(Y \mid X)$ is shared across populations, while pseudo-labeling reinforces the source $Y \mid X$ relationship through the imputed values. In contrast, DRUM explicitly models the role of $A$ through $\bar{f}(X, A)$ and optimizes over the worst-case conditional distribution of $A \mid X$. The implementation details are provided in Appendix~\ref{app:simulation-details}.

\subsection{Setting I: Linear Conditional Relationship}
\label{sec:sim-setting1}

\paragraph{Data-generating Process}
Let $d_X = 15$ and $d_A = 5$. Source covariates are generated as $X_i \stackrel{\text{iid}}{\sim} \mathcal{N}(0, I_{d_X})$ for $i = 1, \ldots, n$ with $n = 5{,}000$. Covariates $A$ follow a linear conditional model:
\begin{equation}
A_i = B^\top X_i + \varepsilon_i, \quad \varepsilon_i \sim \mathcal{N}(0, \sigma_{\mathcal{S}}^2 \, I_{d_A}),
\label{eq:sim-A-source}
\end{equation}
where $B \in \mathbb{R}^{d_X \times d_A}$ is a fixed coefficient matrix (shared for all three simulation settings) with entries of decreasing magnitude: the first 10 rows contain entries ranging from $0$ to $1.0$ encoding the primary dependence of each component of $A$ on subsets of $X$, while rows 11--15 contain smaller entries introducing weaker secondary dependencies (see Appendix~\ref{app:simulation-details} for the full matrix).
The noise scale is $\sigma_{\mathcal{S}} = 0.8$.

The conditional mean function is specified as:
\begin{equation*}
\bar{f}^{\text{Setting I}}(x, a) = 0.1 \sum_{j=1}^{d_A} a_j 
+ 0.1 \sum_{j=1}^{d_A} a_j^2 
+ 0.1 \sum_{j=1}^{d_A} x_j a_j 
+ 0.3(x_1 a_2 + x_2 a_1) 
+ 0.2 \sum_{j=1}^{d_A} \mathrm{sign}(a_j)\, x_j^2,
\end{equation*}
which includes linear, quadratic, interaction, and non-smooth components in $A$; as a result, the dependence of $Y$ on $A$ cannot be captured by simple averaging over $A$. Outcomes are generated as $Y_i = \bar{f}(X_i, A_i) + \eta_i$ with $\eta_i \sim \mathcal{N}(0, 0.05^2)$.
Target covariates are drawn from a shifted distribution: $X_j \stackrel{\text{iid}}{\sim} \mathcal{N}(0.1 \cdot \mathbf{1}_{d_X},\; 1.1^2 \, I_{d_X})$ for $j = 1, \ldots, N$ with $N = 1{,}000$. Only $\{X_j\}_{j=1}^N$ is observed in the target population. 

\paragraph{Evaluations}
To evaluate robustness under varying degrees of distributional shift in $A \mid X$, we generate Monte Carlo test datasets in which the conditional relationship~\eqref{eq:sim-A-source} is perturbed using a coupled design that preserves comparability across perturbation scales.

Specifically, we draw $R=500$ direction matrices $D^{(r)}$ with entries
$D^{(r)}_{kj} \stackrel{\mathrm{iid}}{\sim} \mathrm{Uniform}(-1,1)$ and,
for each replicate, $n_{\mathrm{test}}=N=1{,}000$ noise vectors
$\varepsilon_i^{(r)} \stackrel{\mathrm{iid}}{\sim}
\mathcal{N}(0,\sigma_{\mathcal{S}}^2 I_{d_A})$.
Writing $\varepsilon^{(r)}$ for the $n_{\mathrm{test}}\times d_A$ matrix whose $i$th row is $\varepsilon_i^{(r)\top}$, the direction matrices and noise vectors are held fixed across perturbation scales.
At scale $s > 0$, the test data for replicate $r$ are generated as:
\begin{equation}
A^{(r)}_{\mathrm{test}}
=
X_{\mathrm{test}}
\big(B \odot (s \cdot D^{(r)})\big)
+
\varepsilon^{(r)},
\label{eq:sim-perturbation}
\end{equation}
where $\odot$ denotes element-wise multiplication. The parameter $s$ scales the magnitude of the perturbed conditional dependence. 
Because $\tilde{B}^{(r)} = B \odot (s \cdot D^{(r)})$ has entries of random sign, $s$ controls the magnitude of the test-time $A$--$X$ dependence rather than a signed distance from the source, so error trends across $s$ need not be monotone. 

\begin{figure}[!htbp]
    \centering
    \includegraphics[width=1\linewidth]{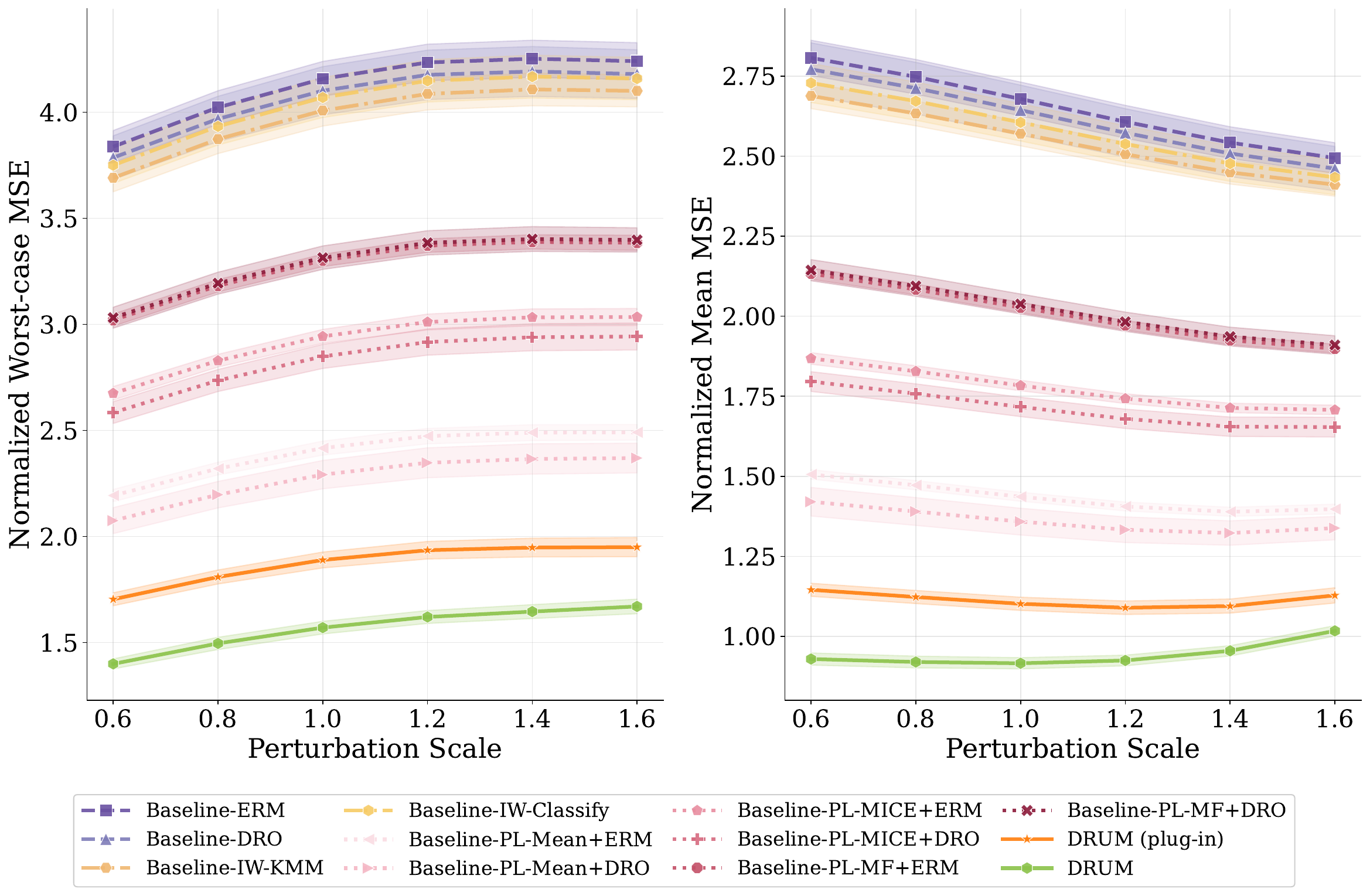}
    \caption{Worst-case and mean normalized MSE across perturbation scales $s$ for Setting~I. Lines show the mean across 10 independent replications; shaded bands indicate $\pm 1$ standard error. Each replicate is evaluated on 500 Monte Carlo test datasets per perturbation scale.
    MSE is normalized by the source outcome variance $\widehat{\mathrm{Var}}(Y^{\mathcal{S}})$.}
    \label{fig:simI}
\end{figure}

The noiseless outcomes $\bar{f}(X_{\mathrm{test}}, A^{(r)}_{\mathrm{test}})$ serve as the ground truth for evaluation.
For Setting I, we consider perturbation scales $s \in \{0.6, 0.8, 1.0, 1.2, 1.4, 1.6\}$.
The entire pipeline (source data generation, train/validation splitting, model training, and evaluation) is independently replicated across 10 random seeds.
For each method, perturbation scale, and replication, we report: 
(i)~the worst-case normalized MSE, defined as $\max_{r=1,\ldots,500}~\mathrm{MSE}_r / \widehat{\mathrm{Var}}(Y^{\mathcal{S}})$, where $\mathrm{MSE}_r$ is the mean squared error on the $r$-th Monte Carlo test set evaluated against the noiseless outcome $\bar{f}(X_{\mathrm{test}}, A^{(r)}_{\mathrm{test}})$ and $\widehat{\mathrm{Var}}(Y^{\mathcal{S}})$ is the sample variance of $Y$ in the source data; 
and (ii)~the mean normalized MSE, averaged over the 500 Monte Carlo draws. 
The worst-case metric captures robustness against adverse perturbations of $A \mid X$ in the spirit of the robust objective~\eqref{eq:objective}; the mean metric reflects average-case performance.
Results are reported as mean $\pm$ standard error across the 10 replications.

\paragraph{Results}
The results of Setting~I are presented in Figure~\ref{fig:simI}. The standard baselines (ERM, DRO) and importance reweighting baselines (IW-KMM, IW-Classify) all exhibit worst-case normalized MSE above 3.8, confirming that correcting the marginal shift in $P(X)$ does not address the conditional shift in $A \mid X$. Among the pseudo-label baselines, PL-Mean+DRO is the strongest competitor.

DRUM (plug-in) substantially outperforms all baselines on both worst-case and mean MSE, demonstrating the benefit of modeling $A \mid X$ through the generator. 
DRUM further improves over the plug-in variant, confirming that the bias correction reduces sensitivity to the estimation error of $\hat{f}$. 
DRUM also achieves the best mean MSE at every perturbation scale. Although baselines such as the pseudo-label methods embed some information about the source distribution of $A$ through imputed target outcomes, DRUM directly models $\bar{f}(x, a)$ and the conditional $A \mid X$, allowing it to adapt to perturbations in the $A \mid X$ relationship rather than treating it as fixed.

\subsection{Setting II: Nonlinear Conditional Relationship}
\label{sec:sim-setting2}
We modify Setting~I in two ways to examine performance when the conditional relationship between $A$ and $X$ is nonlinear. 
The linear mechanism $A = B^\top X + \varepsilon$ is replaced by
\begin{equation*}
A_j = \sum_{i} B_{ij} X_i + \sum_{i} 0.1 \, B_{ij} \, X_i X_{i+1} + \sum_{i} 0.1 \, B_{ij} \, \mathrm{sign}(X_i) X_i^2 + \varepsilon_j, \quad j = 1, \ldots, d_A,
\end{equation*}
where the sums run over $i = 1, \ldots, \min(5, d_X - 1)$ for interactions and $i = 1, \ldots, \min(5, d_X)$ for sign-quadratic terms, and $\varepsilon \sim \mathcal{N}(0, \sigma^2_{\mathrm{noise}} I_{d_A})$.
The outcome function adds a directional interaction:
\begin{equation}\label{eq:fbar_settingII}
\bar{f}^{\text{Setting II}}(x, a) = \bar{f}^{\text{Setting I}}(x, a) + 0.3 \sum_{j=1}^{d_A} \tanh(x_j)\, a_j,
\end{equation}
so that the effect of $A$ on $Y$ varies nonlinearly with $X$ through the $\tanh$ modulation.
We set $d_A = 2$ and $\sigma_{\mathrm{noise}} = 0.3$ (compared to $d_A = 5$ and $0.8$ in Setting~I), strengthening the signal-to-noise ratio. 
All other parameters and the Monte Carlo evaluation procedure remain identical to Setting~I, with perturbations applied to both linear and nonlinear components of $A \mid X$.

\begin{figure}[!htbp]
    \centering
    \includegraphics[width=1\linewidth]{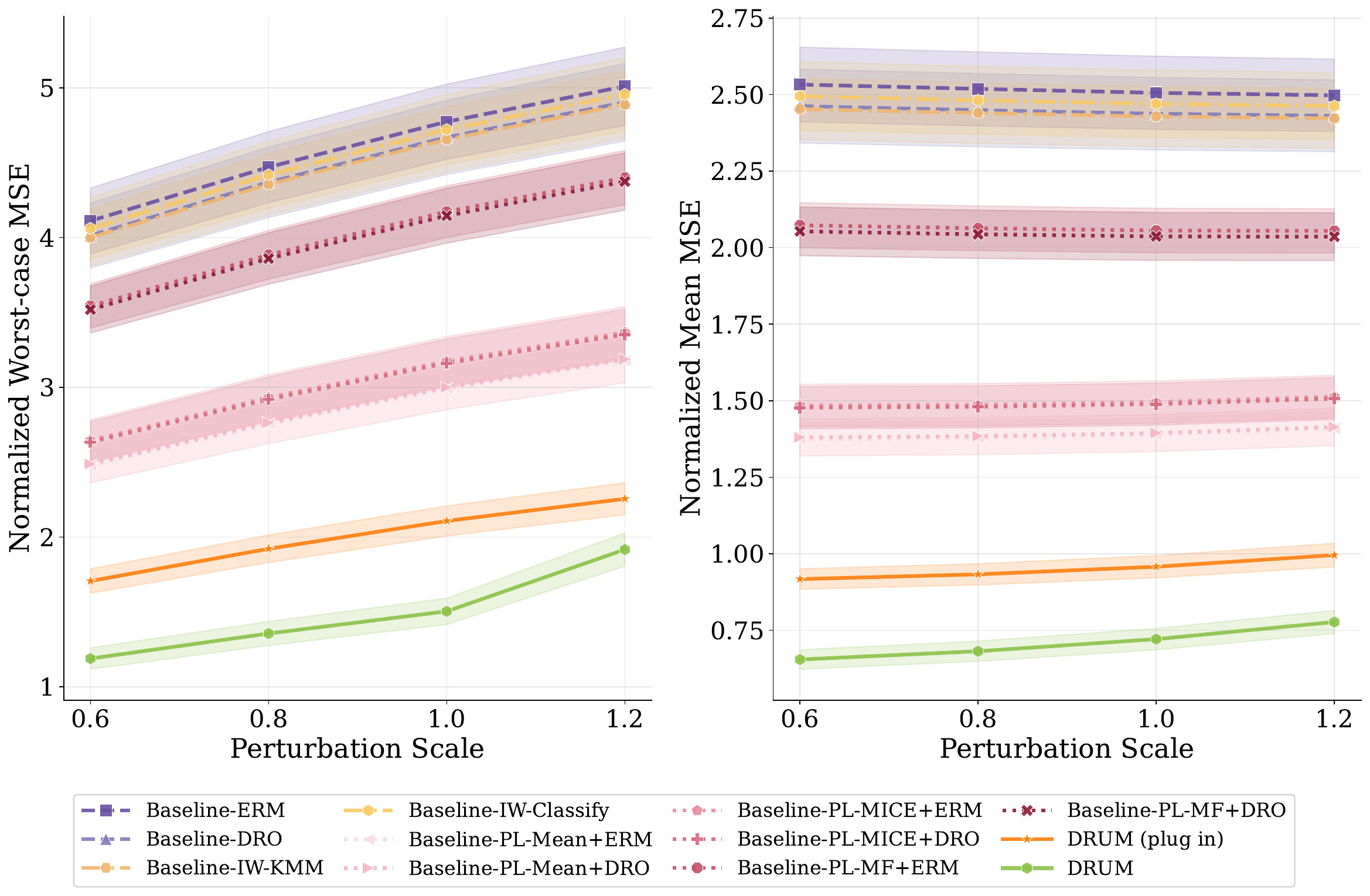}
    \caption{Worst-case and mean normalized MSE across perturbation scales $s$ for Setting~II. 
    Lines show the mean across 10 independent replications; shaded bands indicate $\pm 1$ standard error. Each replicate is evaluated on 500 Monte Carlo test datasets per perturbation scale.
    MSE is normalized by the source outcome variance $\widehat{\mathrm{Var}}(Y^{\mathcal{S}})$.}
    \label{fig:simII}
\end{figure}

\paragraph{Results}
The results for Setting~II are shown in Figure~\ref{fig:simII}. The standard baselines (ERM, DRO) and importance reweighting baselines (IW-KMM, IW-Classify) exhibit higher worst-case MSE than in Setting~I. Among the pseudo-label baselines, PL-Mean+DRO is the strongest competitor.
DRUM (plug-in) substantially outperforms all baselines on both worst-case and mean MSE. 
DRUM further improves over the plug-in variant across all perturbation scales, confirming the benefit of bias correction.
Compared to Setting~I, the worst-case MSE increases more steeply with $s$, which may reflect the more complex nonlinear $A \mid X$ relationship in this setting.

\subsection{Setting III: Varying Dimension of A}

\begin{figure}[!htbp]
    \centering
    \includegraphics[width=\linewidth]{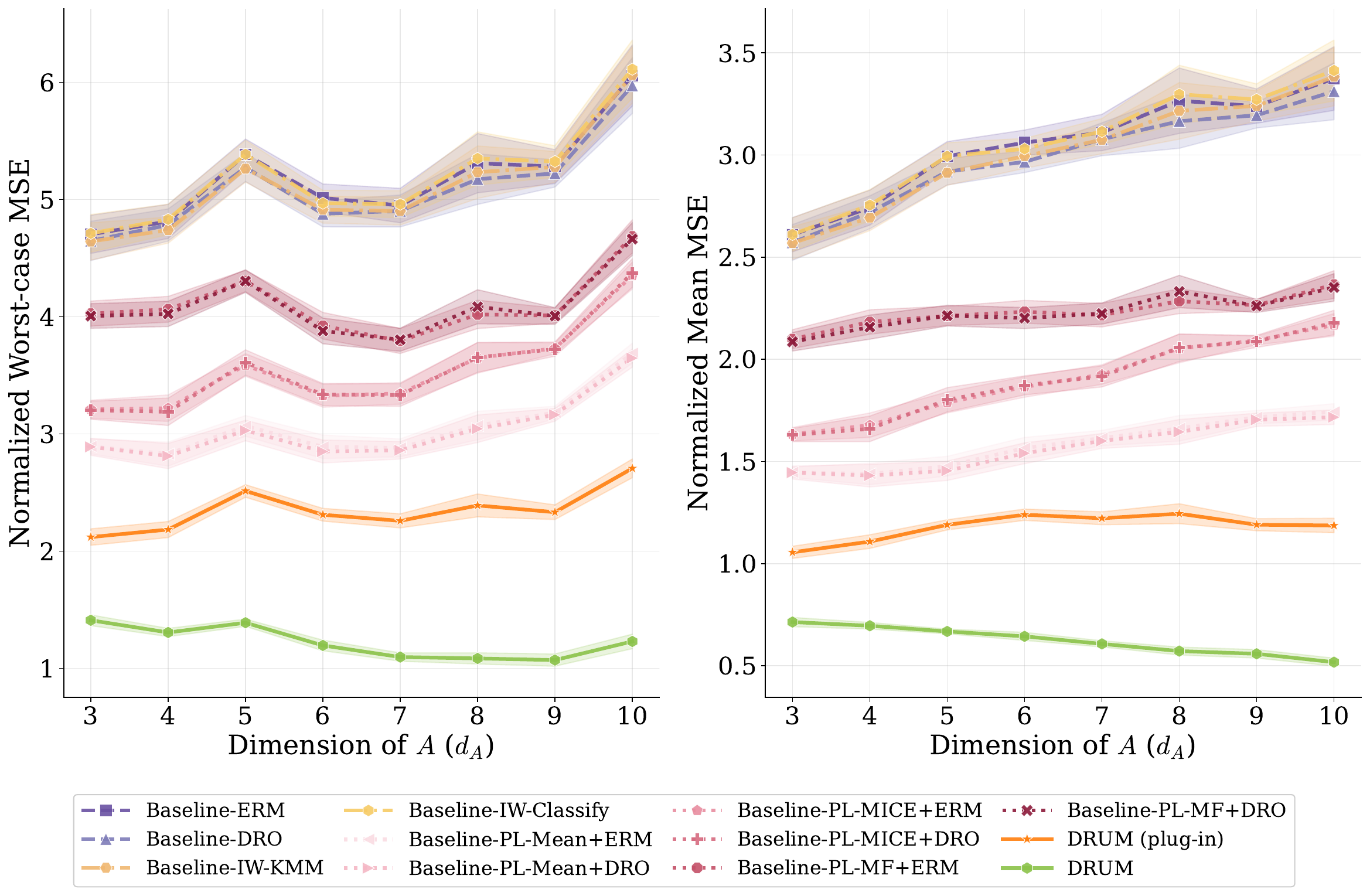}
    \caption{Worst-case and mean normalized MSE across different $d_A$ with perturbation scales fixed at $s=1.2$. 
    Lines show the mean across 10 independent replications; shaded bands indicate $\pm 1$ standard error. Each replicate is evaluated on 500 Monte Carlo test datasets for each value of $d_A$.
    MSE is normalized by the source outcome variance $\widehat{\mathrm{Var}}(Y^{\mathcal{S}})$.}
    \label{fig:simIII}
\end{figure}

We further extend Setting~II to examine how performance scales with $d_A$. We vary $d_A \in \{3, 4, 5, 6, 7, 8, 9, 10\}$ while keeping $d_X = 15$, $n = 5{,}000$, $N = 1{,}000$, $\sigma_{\mathrm{noise}} = 0.3$, and all other data-generating process parameters fixed. 
The outcome function and nonlinear $A \mid X$ mechanism are identical to Setting~II, using the first $d_A$ columns of the fixed coefficient matrix $\bar{B}$ specified in Appendix~\ref{append.simI}.
We fix the perturbation scales at $s = 1.2$ and evaluate each method on 500 Monte Carlo test datasets at each $d_A$.

\paragraph{Results}
The results of Setting~III are shown in Figure~\ref{fig:simIII}.
The standard and importance weighting baselines degrade as $d_A$ grows, with worst-case normalized MSE exceeding $5.0$ at multiple $d_A$ values.
Among the pseudo-label baselines, PL-Mean+DRO is the strongest competitor and achieves competitive worst-case MSE at several values of $d_A$. 
DRUM (plug-in) substantially outperforms all baselines on both worst-case and mean MSE. DRUM further improves over the plug-in variant across all values of $d_A$, confirming again the benefit of bias correction.

\section{Real Data Analysis}
\label{sec.realdata}

We apply DRUM to the cross-national OHCA prediction task described in Section~\ref{sec:data}, using US-ROC as labeled source data and Singapore covariates $X$ as unlabeled target data for generator training, and compare it with the same 10 baselines used in the simulations. No Singapore outcome labels enter model fitting, generator training, or bias correction; a small held-out labeled subset is used only for final hyperparameter selection, while the remaining Singapore labels and the entire Japan and Korea cohorts are reserved for evaluation. We report the bias-corrected estimator as DRUM, with full implementation details and results for the plug-in variant provided in Appendix~\ref{app:realdat-details}.

\begin{figure}[!htbp]
    \centering
    \includegraphics[width=\linewidth]{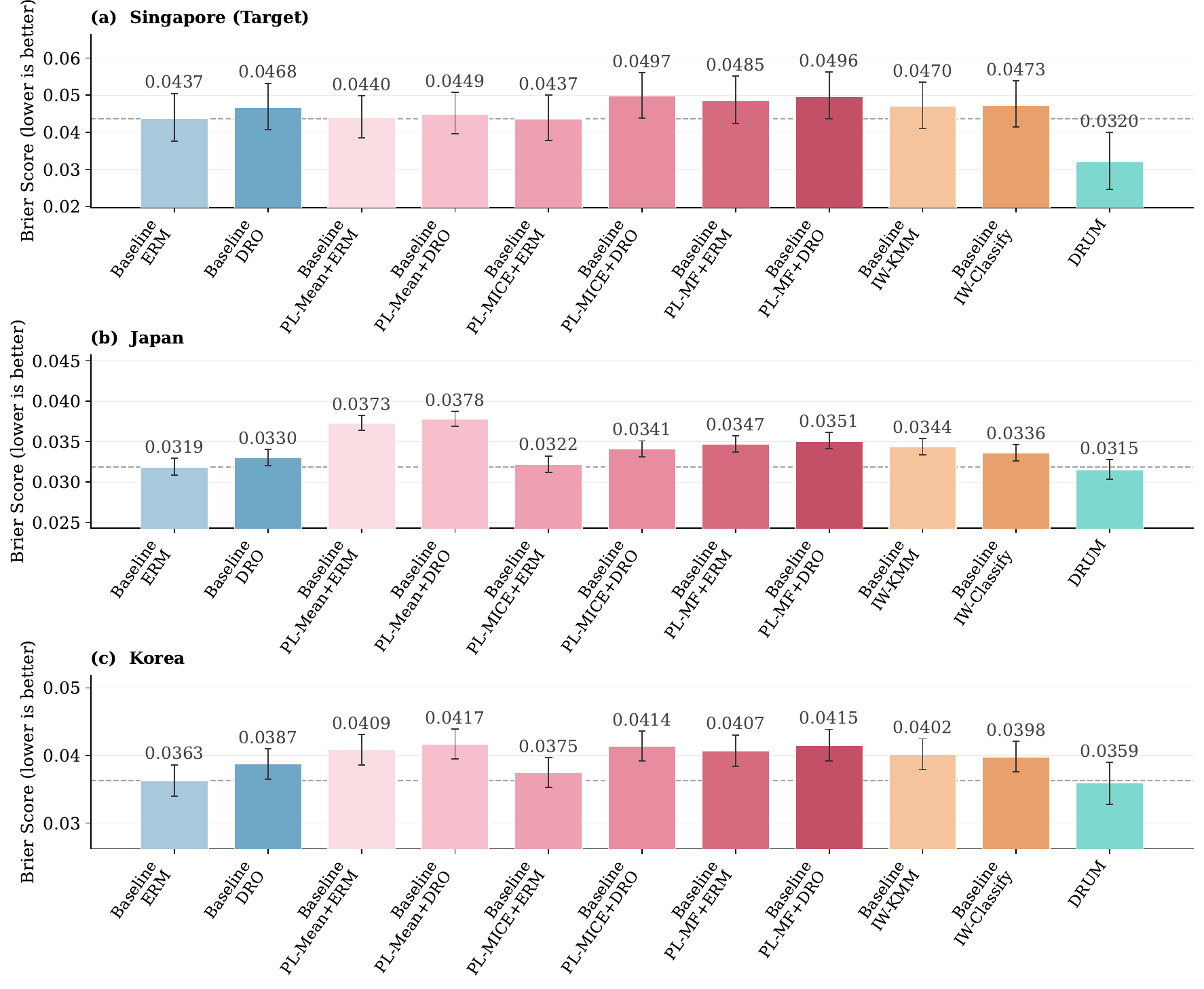}
    \caption{Brier scores across three OHCA populations. Bars show point estimates with 95\% bootstrap confidence intervals ($B=2{,}000$) for all 10 baselines and DRUM. Panel (a): target population (Singapore); panels (b)--(c): external validation cohorts (Japan and Korea). The dashed horizontal line in each panel indicates the lowest Brier score among the 10 baselines at that site.}
    \label{fig:brier_ohca_main}
\end{figure}

Figure~\ref{fig:brier_ohca_main} compares Brier scores for DRUM and all 10 baselines across the three Asian populations. DRUM attains the lowest Brier score at each site. The improvement is most pronounced in the Singapore target population, where DRUM achieves a Brier score of 0.0320 compared with 0.0437 for the best-performing baseline. In the external Japan and Korea cohorts, DRUM remains lowest but is nearly level with the strongest baseline, with Brier scores of 0.0315 versus 0.0319 in Japan and 0.0359 versus 0.0363 in Korea.

The standard DRO and importance-weighting approaches do not improve substantially over source-only ERM. Neither explicitly accounts for shifts involving the structurally missing prehospital variables $A$: importance weighting adjusts the marginal covariate shift in $P(X)$, whereas standard DRO robustifies the observed $Y \sim X$ prediction problem.
This interpretation is consistent with a source-data diagnostic reported in Appendix~\ref{app:realdat-details}: the prehospital variables are weakly predictable from $X$ but contribute approximately 40\% of the total feature importance for outcome prediction, indicating that they contain substantial information that cannot readily be recovered from the shared covariates alone.

\begin{figure}[!htbp]
    \centering
    \includegraphics[width=\linewidth]{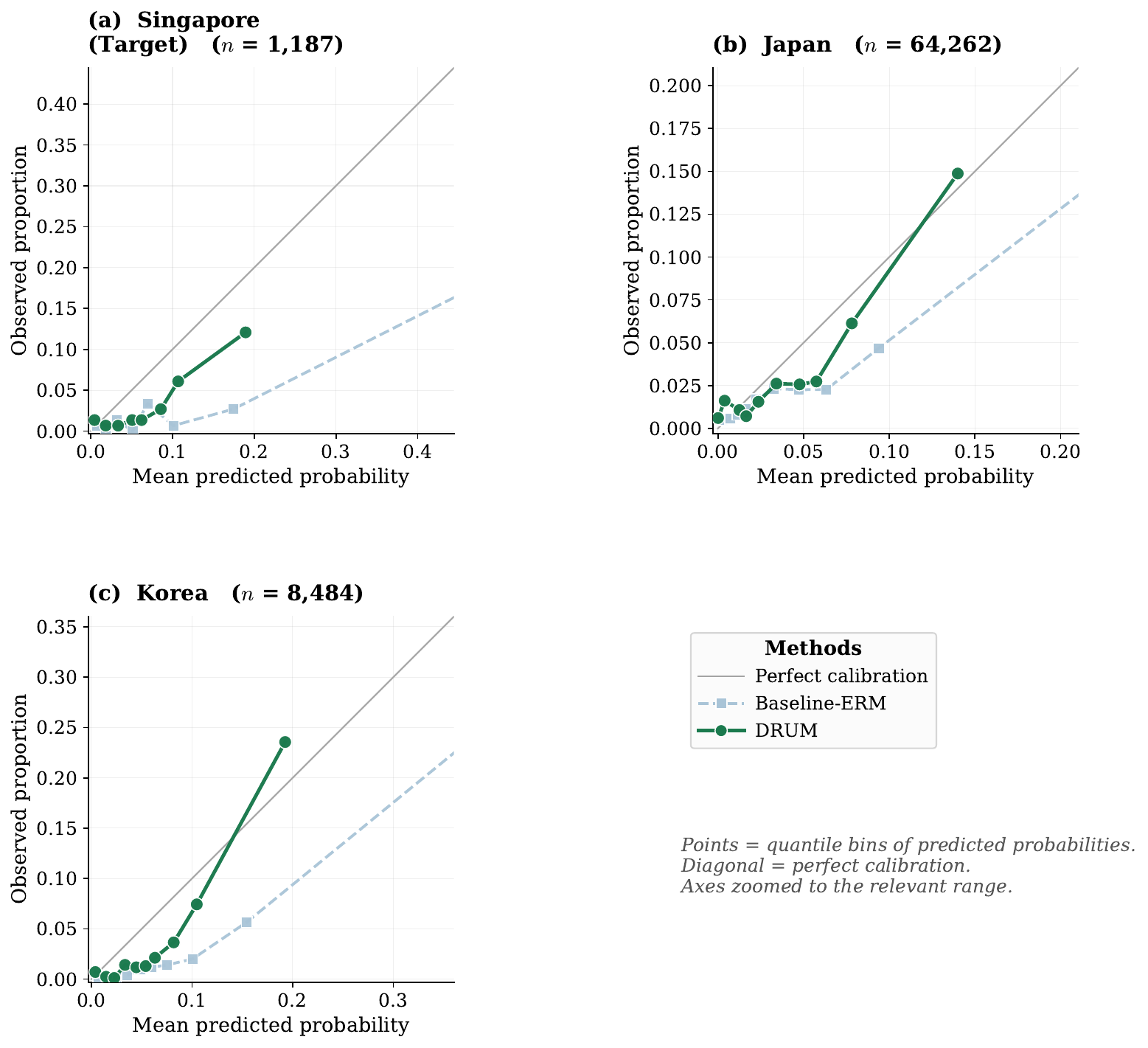}
    \caption{Calibration plots for DRUM and Baseline-ERM across three OHCA populations. Points show the mean predicted probability and observed outcome proportion within quantile bins, using 8 bins for Singapore and 10 bins for Japan and Korea. The diagonal line represents perfect calibration; axes are truncated to the relevant ranges for visibility. Baseline-ERM is shown because it achieved the lowest ECE among the 10 baselines at all three sites. Calibration curves for all methods are provided in Appendix Figure~\ref{fig:cal_all_appendix}.}
    \label{fig:cal_ohca_main}
\end{figure}

Figure~\ref{fig:cal_ohca_main} compares the calibration of DRUM with Baseline-ERM, the baseline with the lowest ECE at all three sites. Calibration measures the agreement between predicted probabilities and observed outcome frequencies and is particularly important when clinical decisions depend on absolute risk rather than patient ranking alone. For example, a model that systematically predicts risks of $15$--$20\%$ when the observed event rate is $3$--$5\%$ may misinform clinical decisions even if it ranks patients accurately.

The main-text figure is restricted to DRUM and Baseline-ERM because overlaying all 11 calibration curves would obscure the bin-level patterns; calibration curves for all methods are provided in Appendix Figure~\ref{fig:cal_all_appendix}. Baseline-ERM systematically over-predicts favorable neurological outcome across the three Asian populations, with its calibration curve lying below the perfect-calibration diagonal over most of the predicted-risk range. This pattern is consistent with the prevalence difference between the US source population, where the favorable-outcome prevalence is $19.3\%$, and the Asian populations, where it ranges from $3.1\%$ to $4.2\%$. In contrast, DRUM's calibration curve lies substantially closer to the diagonal at all three sites.

Figure~\ref{fig:ohca_ece} quantifies these differences using expected calibration error (ECE). DRUM achieves the lowest ECE among all evaluated methods and reduces ECE by at least $45\%$ relative to the best-performing baseline at each site.
ECE was computed as
\[
\mathrm{ECE}
=
\sum_{b=1}^{K}
\frac{n_b}{n}
\left|
\bar{Y}_b-\bar{p}_b
\right|,
\]
where $n_b$ is the number of observations in quantile bin $b$, $\bar{Y}_b$ is the observed outcome proportion, and $\bar{p}_b$ is the mean predicted probability in that bin. We used 8 bins for Singapore and 10 bins for Japan and Korea, consistent with the calibration plots.

\begin{figure}
    \centering
    \includegraphics[width=\linewidth]{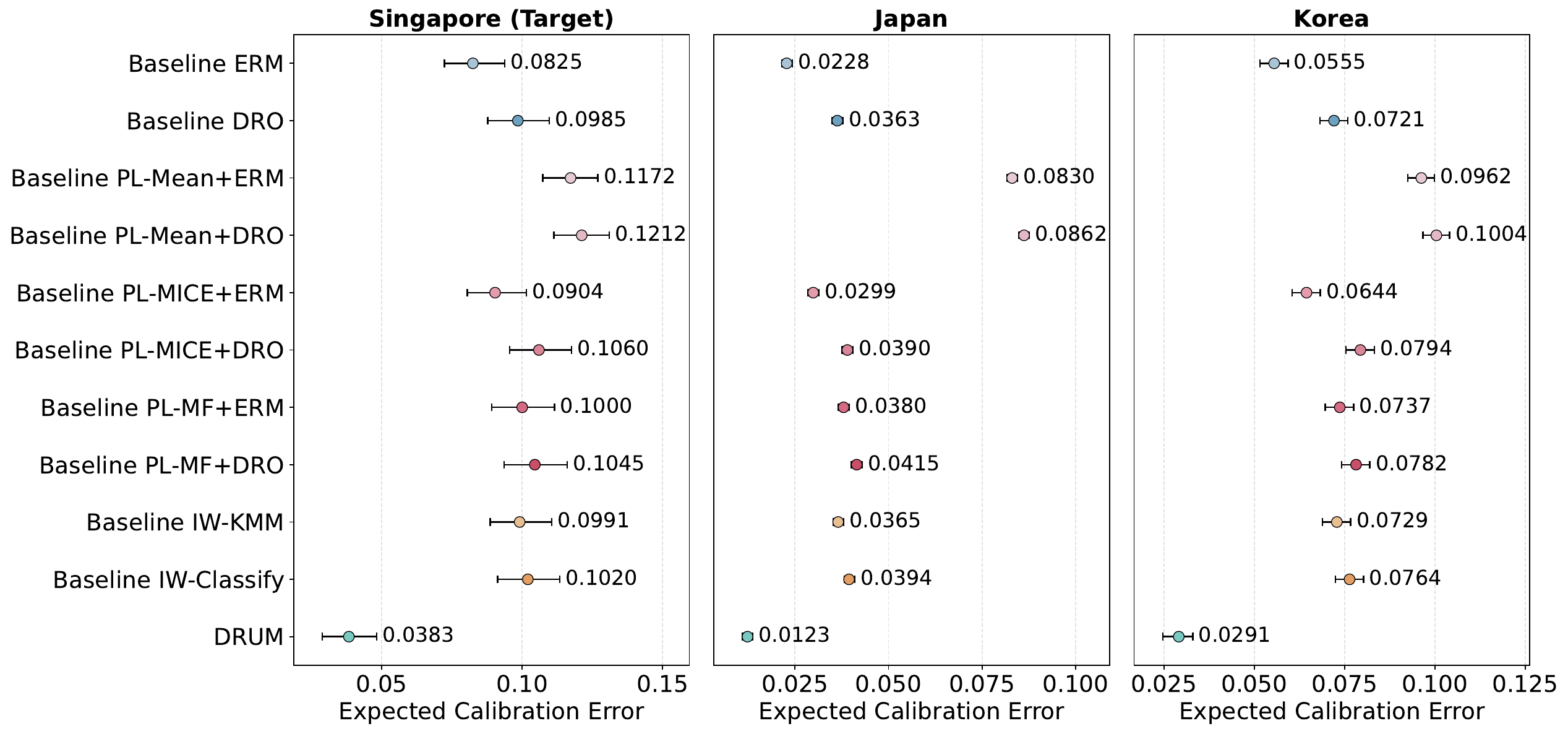}
    \caption{Expected calibration error (ECE) across three OHCA populations. Points show ECE estimates and horizontal whiskers indicate 95\% bootstrap confidence intervals ($B=2{,}000$) for all 10 baselines and DRUM; lower values indicate better calibration. ECE was computed using quantile binning with 8 bins for Singapore and 10 bins for Japan and Korea. Panel (a): target population (Singapore); panels (b)--(c): external validation cohorts (Japan and Korea).}
    \label{fig:ohca_ece}
\end{figure}

Additional discrimination metrics (AUROC, AUPRC) and fixed-cutoff classification results are reported in Appendix~\ref{app:realdat-details} (Tables~\ref{tab:discrimination_appendix}--\ref{tab:fixed-cutoff}). The plug-in estimator matches the baselines on discrimination, while the bias-corrected estimator trades some rank-based discrimination for its calibration gains. 
At clinically relevant probability cutoffs near the target prevalence (Appendix Table~\ref{tab:fixed-cutoff}), DRUM remains competitive with the best baseline on F1, precision, and specificity, indicating that the lower rank-based discrimination does not necessarily translate into materially worse classification at the thresholds relevant to clinical deployment.

Taken together, these results are consistent with the hypothesis that unobserved prehospital-care covariates contribute to cross-national miscalibration, though because $A$ is not observed in the PAROS sites the target distribution of $A \mid X$ cannot be directly identified. Regardless of the underlying mechanism, DRUM's worst-case formulation yields well-calibrated predictions across deployment sites without requiring assumptions about the target distribution of $A$.

\section{Discussion}
\label{sec:discussion}
This study addresses unsupervised transfer learning under structural covariate missingness, where outcome-relevant variables are available during source training but absent at deployment and target labels are unavailable. 
Across simulations and cross-national OHCA prediction, DRUM improved robust prediction performance by using rich source covariates to learn the outcome mechanism while producing a deployable predictor that depends only on shared covariates $X$. 

The simulation studies isolate the mechanism behind this advantage. Under controlled perturbations of the conditional distribution $A \mid X$, DRUM outperformed shift-correction and imputation baselines on both worst-case and average prediction error across all perturbation scales and dimensions of $A$, precisely the regime the baselines cannot address because they model $Y \mid X$ under a fixed conditional. The bias correction further improved over the plug-in estimator throughout, confirming that reducing sensitivity to nuisance-estimation error yields tangible finite-sample gains.

In the real-data application, DRUM's advantage is concentrated in calibration (ECE), where it improves over the best baseline at all three sites, while overall accuracy (Brier score) and threshold-based classification remain competitive with the baselines. This is consistent with the nature of the distributional shift in this application: with a low-dimensional shared covariate space, the relative ranking of patient risk is largely preserved across populations, and the primary effect of cross-national deployment is miscalibration of predicted probabilities due to the large difference in outcome prevalence between the US source (19.3\%) and Asian target populations (3--5\%). 
AUROC and AUPRC are slightly lower for the bias-corrected DRUM, which trades some rank-based discrimination for improved calibration. Detailed discrimination results are reported in Appendix~\ref{app:realdat-details}.

The structural missingness problem DRUM addresses is a recurring obstacle in clinical prediction and a systemic challenge in global health AI. Models developed in high-resource research settings~\citep{van2021unravelling} frequently rely on measurements that deployment sites lack, because clinical data infrastructure varies substantially between high- and low-resource settings~\citep{li2026enabling}. The OHCA application illustrates this concretely with prehospital care variables, but the same pattern emerges whenever prediction models cross institutional or regional boundaries with heterogeneous data collection capabilities. Existing responses are unsatisfactory: one restricts deployment to sites that mirror the source infrastructure, limiting the reach of well-developed models, while the other discards the unavailable covariates, sacrificing the predictive information they carry.

DRUM offers a principled alternative. Transfer learning methods that require identical feature sets across sites are inapplicable in this regime; DRUM instead leverages rich source data to build deployable predictors that remain robust when the deployment site lacks a subset of the training covariates, a step toward more generalizable and equitable clinical prediction across heterogeneous healthcare systems.

\paragraph{Limitations and future directions.}
Several directions for extension remain. 
First, the current generator-based parameterization assumes continuous $A$. When $A$ includes binary or categorical components, adapting the generator is nontrivial: binary variables may be handled using an expit link with a small bandwidth parameter to maintain differentiability, while categorical variables may require Gumbel-Softmax relaxations~\citep{jang2017categorical}. Extensions to high-dimensional or mixed-type $A$ remain an open direction.
Second, the framework currently assumes a single source population; when multiple source sites have heterogeneous conditional distributions, integrating DRUM into a federated learning~\citep{li2024federated} framework that leverages multiple sources while preserving data privacy remains an open direction.
Finally, the mathematical structure of DRUM is not specific to structurally missing covariates. The same formulation applies when covariates are available during training but deliberately excluded at deployment, as in algorithmic fairness settings where protected attributes must not influence predictions~\citep{li2025ROME, li2025fairfml}, or when covariates are observed but unstable across environments, as in pharmaceutical stability prediction where laboratory conditions differ from field deployment~\citep{guideline2003stability}. Exploring these connections is a promising direction.

\section*{Acknowledgments}
\paragraph{Pan-Asian Resuscitation Outcomes Study Clinical Research Network (PAROS CRN).}

Participating Site Investigators: K Kajino (Kansai Medical University, Osaka, Japan), T Nishiuchi (Graduate School of Medical Sciences and Faculty of Medicine, Kindai University, Osaka, Japan); T Tagami (Nippon Medical School Tama Nagayama Hospital, Tokyo, Japan); KJ Hong (Seoul National University Hospital, Seoul, South Korea); JH Park (Seoul National University Hospital, Seoul, South Korea); KW Lee (Keimyung University Dongsan Hospital, Daegu, South Korea); WC Cha (Samsung Medical Center, Seoul, South Korea); KJ Song (Boramae Medical Center; Seoul, South Korea); YS Ro (Seoul National University Hospital, Seoul, South Korea); JY Kim (Seoul National University Hospital, Seoul, South Korea); YJ Lee (Seoul National University Hospital, Seoul, South Korea); S Moon (Korea University Ansan Hospital, Gyeonggi, South Korea); DA Nguyen (Bach Mai Hospital, Hanoi, Viet Nam); QTA Hoang (Hue Central General Hospital, Hue, Viet Nam); TT Tra (Cho Ray Hospital, Ho Chi Minh, Vietnam);  PD Quyet (115 Emergency Center, Ho Chi Minh, Viet Nam); CQ Luong (Bach Mai Hospital, Hanoi, Viet Nam); D PFlug (Singapore Civil Defence Force, Singapore); BSH Leong (National University Hospital, Singapore); WM Ng (Ng Teng Fong General Hospital, Singapore); NE Doctor (Sengkang General Hospital, Singapore); MYC Chia (Tan Tock Seng Hospital, Singapore); HN Gan (Changi General Hospital, Singapore); L Tiah (Changi General Hospital, Singapore); WL Tay (Ng Teng Fong General Hospital, Singapore); SY Low (Sengkang General Hospital, Singapore); LP Tham (KK Women’s and Children’s Hospital, Singapore); SL Lim (National University Heart Centre Singapore, Singapore); ES Goh (Woodlands Health, Singapore); SO Cheah (Urgent Care Clinic International, Singapore); Mao DRH (Khoo Teck Puat Hospital, Singapore); YY Ng (National University Singapore, Singapore); G Nadarajan (Singapore General Hospital, Singapore); ISY Chua (Singapore General Hospital, Singapore); AFW Ho (Singapore General Hospital, Singapore); S Arulanandam (previously from Singapore Civil Defence Force, Singapore); KC Tan (previously from Singapore Civil Defence Force, Singapore); SL Chong (KK Women’s and Children’s Hospital, Singapore).

The authors would like to thank Ms Maeve Pek from Pre-hospital and Emergency Research Centre, Duke-NUS Medical School, Singapore; the late Ms Susan Yap from Department of Emergency Medicine, Singapore General Hospital, Singapore; Ms Noor Azuin, Ms Nurul Asyikin and Ms Liew Le Xuan from Unit for Pre-hospital Emergency Care, Singapore; Ms Charlene Ong and Ms Anju Devi from Accident \& Emergency, Changi General Hospital, Singapore and Ms Woo Kai Lee from Department of Cardiology, National University Heart Centre Singapore for their contributions and support to the Singapore PAROS registry; Ms Patricia Tay from Singapore Clinical Research Institute for providing secretariat support to the PAROS CRN, and Unit for Pre-hospital Emergency Care, Singapore for facilitating/implementing the initiatives that have brought about improved response to OHCA. The authors would also like to extend their sincere gratitude to all coordinators, researchers, and EMS providers from the PAROS CRN for their invaluable support of the registry.

\section*{Ethics Statement}
This study used data from two sources. The US Resuscitation Outcomes Consortium (ROC) data were approved by the National University of Singapore Institutional Review Board, which granted an exemption for this study (NUS-IRB-2023-451). The Pan-Asian Resuscitation Outcomes Study (PAROS) data were approved by the relevant ethics committees at each participating site and by the Centralized Institutional Review Board and Domain Specific Review Board for Singapore (reference numbers: 2010-270-C, C-10-545, 2013/604/C, and 2013/00929). Informed consent was waived for both datasets due to the retrospective, observational nature of the study, and all data were de-identified prior to analysis.

\section*{Data and Code Availability}
The Python implementation of the proposed algorithms is available at \url{https://github.com/siqili0325/DRUM}.
The Resuscitation Outcomes Consortium (ROC) Epistry database (Version 3, April 2011–June 2015) can be requested through the NIH website at \url{https://biolincc.nhlbi.nih.gov/studies/roc_cardiac_epistry_3/}.
The PAROS dataset contains confidential information and is governed by IRB restrictions. In accordance with these policies, the data are available to qualified researchers upon reasonable request, pending approval from the PAROS governing body.

\section*{Funding}
{\sloppy This study was supported by the Khoo Postdoctoral Fellowship Award (Duke-NUS-KPFA/2025/0081) funded by the Estate of Tan Sri Khoo Teck Puat, awarded to SL; the National Medical Research Council, Singapore, through Clinician Scientist Awards (NMRC/CSA/024/2010 and NMRC/CSA/0049/2013), and the Ministry of Health, Singapore, through the Health Services Research Grant (HSRG/0021/2012), awarded to MEHO. The funders had no role in study design, data collection, analysis, interpretation, or manuscript preparation. \par}

\section*{Conflict of Interest}
Prof Marcus EH Ong is a member of the Editorial Board of Resuscitation.
Prof Marcus EH Ong reports grants from the Laerdal Foundation, Laerdal Medical, and Ramsey Social Justice Foundation for funding of the Pan-Asian Resuscitation Outcomes Study; an advisory relationship with Global Healthcare SG, a commercial entity that manufactures cooling devices. He has a licensing agreement with ZOLL Medical Corporation and patent filed (Application no: 13/047,348) for a “Method of predicting acute cardiopulmonary events and survivability of a patient.” He is also the co-founder and scientific advisor of TIIM Healthcare, a commercial entity which develops real-time prediction and risk stratification solutions for triage. These organizations have no role in conducting this research.

\bibliographystyle{apalike}
\bibliography{library}

\newpage
\appendix

\section{Derivation of the Dual Form}
\label{app:dual-derivation}
We derive the equivalent form~\eqref{eq:reduced} and the optimal predictor~\eqref{eq:m-optimal} stated in the Dual-Form Theorem of Section~\ref{sec2.2}.
The derivation is understood to hold under standard regularity conditions ensuring that the order of maximization and minimization can be exchanged.

Throughout, we write the objective in~\eqref{eq:objective} as
\begin{equation}
\Phi(m, \mathbb{P}_{A\mid X})
= \mathbb{E}_{X \sim \mathbb{Q}_X}\,
  \mathbb{E}_{A \sim \mathbb{P}_{A\mid X}}\,
  \mathbb{E}_{Y \sim \mathbb{P}^{\mathcal{S}}_{Y\mid X, A}}
  \big[\, Y^2 - (Y - m(X))^2 \,\big],
\label{eq:app-objective}
\end{equation}
so that~\eqref{eq:objective} reads $\max_{m} \min_{\mathbb{P}_{A\mid X} \in \mathcal{C}(\delta)} \Phi(m, \mathbb{P}_{A\mid X})$.

\textbf{Step 1: Reduction of the innermost expectation over $Y$.}
Fix $x$ and $a$. Expanding the square,
\begin{equation}
Y^2 - (Y - m(x))^2 = 2\,Y\,m(x) - m(x)^2 ,
\label{eq:app-expand}
\end{equation}
which is affine in $Y$. Taking the expectation over $Y \sim \mathbb{P}^{\mathcal{S}}_{Y \mid X=x, A=a}$ and using the Conditional Stability Assumption, under which this law is shared by source and target, gives
\begin{equation}
\mathbb{E}_{Y \sim \mathbb{P}^{\mathcal{S}}_{Y\mid X=x, A=a}}
\big[\, Y^2 - (Y - m(x))^2 \,\big]
= 2\,\bar{f}(x, a)\, m(x) - m(x)^2 ,
\qquad
\bar{f}(x,a) := \mathbb{E}[Y \mid X=x, A=a].
\label{eq:app-inner-Y}
\end{equation}
The quadratic term $\mathbb{E}[Y^2 \mid x, a]$ present in each of $Y^2$ and $(Y-m)^2$ cancels exactly in the difference, so only the conditional mean $\bar{f}$ enters. Substituting~\eqref{eq:app-inner-Y} into~\eqref{eq:app-objective},
\begin{equation}
\Phi(m, \mathbb{P}_{A\mid X})
= \mathbb{E}_{X \sim \mathbb{Q}_X}\,
  \mathbb{E}_{A \sim \mathbb{P}_{A\mid X}}
  \big[\, 2\,\bar{f}(X, A)\, m(X) - m(X)^2 \,\big].
\label{eq:app-phi-fbar}
\end{equation}

\textbf{Step 2: Exchange of $\max_m$ and $\min_{\mathbb{P}}$.}
We verify the hypotheses of Sion's minimax theorem~\citep{sion1958general} for the map $(m, \mathbb{P}_{A\mid X}) \mapsto \Phi(m, \mathbb{P}_{A\mid X})$.
For fixed $\mathbb{P}_{A\mid X}$, the integrand in~\eqref{eq:app-phi-fbar} is, at each $x$, a concave (indeed strictly concave) quadratic in the scalar $m(x)$; since the objective separates across $x$ under the outer expectation, $m \mapsto \Phi(m, \mathbb{P}_{A\mid X})$ is concave on the space of predictors.
For fixed $m$, the map $\mathbb{P}_{A\mid X} \mapsto \Phi(m, \mathbb{P}_{A\mid X})$ is linear in $\mathbb{P}_{A\mid X}$, since $\Phi$ is an expectation with respect to $\mathbb{P}_{A\mid X}$ of a fixed integrand; in particular it is convex. The uncertainty set $\mathcal{C}(\delta)$ is convex, and the boundedness of $\bar{f}$ ensures $\Phi$ is finite on the relevant domain. These conditions license exchanging the order of optimization:
\begin{equation}
\max_{m(\cdot)}\ \min_{\mathbb{P}_{A\mid X} \in \mathcal{C}(\delta)}
\Phi(m, \mathbb{P}_{A\mid X})
=
\min_{\mathbb{P}_{A\mid X} \in \mathcal{C}(\delta)}\ \max_{m(\cdot)}
\Phi(m, \mathbb{P}_{A\mid X}).
\label{eq:app-swap}
\end{equation}

\textbf{Step 3: Inner maximization over $m$ in closed form.}
Fix $\mathbb{P}_{A\mid X}$ and consider the inner maximization on the right-hand side of~\eqref{eq:app-swap}. Because~\eqref{eq:app-phi-fbar} separates across $x$, the maximization can be carried out pointwise: for each $x$,
\begin{equation}
\max_{m(x) \in \mathbb{R}}\
\Big\{\, 2\, \big(\mathbb{E}_{A \sim \mathbb{P}_{A\mid X=x}}[\bar{f}(x,A)]\big)\, m(x) - m(x)^2 \,\Big\}.
\label{eq:app-pointwise}
\end{equation}
This is a strictly concave quadratic in $m(x)$; setting its derivative to zero, $2\,\mathbb{E}_{A \sim \mathbb{P}_{A\mid X=x}}[\bar{f}(x,A)] - 2\, m(x) = 0$, yields the unique maximizer
\begin{equation}
m^*_{\mathbb{P}}(x) = \mathbb{E}_{A \sim \mathbb{P}_{A\mid X=x}}[\bar{f}(x, A)],
\label{eq:app-m-optimal}
\end{equation}
which is~\eqref{eq:m-optimal}. Substituting the optimal value back, the maximized pointwise objective equals $\big(\mathbb{E}_{A \sim \mathbb{P}_{A\mid X=x}}[\bar{f}(x,A)]\big)^2$, so
\begin{equation}
\max_{m(\cdot)} \Phi(m, \mathbb{P}_{A\mid X})
= \mathbb{E}_{X \sim \mathbb{Q}_X}
  \Big(\, \mathbb{E}_{A \sim \mathbb{P}_{A\mid X}}[\bar{f}(X, A)] \,\Big)^2.
\label{eq:app-maxed}
\end{equation}

\textbf{Step 4: Reduced outer problem.}
Combining~\eqref{eq:app-swap} and~\eqref{eq:app-maxed}, the minimax problem reduces to the single minimization over the worst-case conditional distribution,
\begin{equation}
\min_{\mathbb{P}_{A\mid X} \in \mathcal{C}(\delta)}\
\mathbb{E}_{X \sim \mathbb{Q}_X}
\Big(\, \mathbb{E}_{A \sim \mathbb{P}_{A\mid X}}[\bar{f}(X, A)] \,\Big)^2,
\label{eq:app-reduced}
\end{equation}
which is~\eqref{eq:reduced}. 
Writing $\mathbb{P}^*_{A\mid X}$ for a minimizer, the robust prediction function is $m^*(x) = m^*_{\mathbb{P}^*}(x) = \mathbb{E}_{A \sim \mathbb{P}^*_{A\mid X=x}}[\bar{f}(x, A)]$ by~\eqref{eq:app-m-optimal}. \qed

\section{Debiased Generator and Density-Ratio Estimation}
\label{app:debiased-details}
This appendix details the two components of the bias correction of Section~\ref{sec.biascorrection}: the density ratio $\hat{\omega}$ used to reweight source residuals, and the debiased generator $\hat{g}^{\mathrm{deb}}$ used to form the plug-in term of the pseudo-outcome~\eqref{eq:F-pseudo}. 

As in Section~\ref{sec.biascorrection}, $\hat{f} = f_{\hat\psi}$ denotes the Stage-1 estimate and $\hat{g} = g_{\hat\phi}$ the preliminary Stage-2 generator.

\paragraph{Density-ratio estimation.}
The density ratio $\hat{\omega}(x,a)$ reweights source observations to match the distribution induced by the preliminary generator from Stage~2. Let $(X, A) \sim \mathbb{Q}^{\hat{g}}_{X,A}$ denote the joint distribution with $X \sim \mathbb{Q}_X$ and $A = \hat{g}(X, \epsilon)$. Then
\begin{equation}
\hat{\omega}(x,a) = \frac{\mathbb{Q}^{\hat{g}}_{X,A}(x,a)}{\mathbb{P}^{\mathcal{S}}_{X,A}(x,a)}
\label{eq:app-omega}
\end{equation}
is estimated via probabilistic classification. Source observations $(X_i, A_i)$ and synthetic pairs $(X_j, \hat{g}(X_j, \epsilon_j))$ are pooled, a classifier $\hat{p}(S{=}1 \mid X, A)$ is trained to distinguish the two, and the ratio is recovered as $\hat{\omega} = (1 - \hat{p})/\hat{p}$, normalized to unit mean within each cross-fitting fold.

\paragraph{Debiased generator.}
The preliminary generator $\hat{g}$ from Stage~2 is subject to the estimation error of $\hat{f}$. We therefore define a fold-specific generator $\hat{g}^{\mathrm{deb}}$ by minimizing the augmented Lagrangian
\begin{equation}
\begin{split}
\mathcal{L}(\phi, \lambda) = &\frac{1}{N}\sum_{j=1}^N \left( \frac{1}{L} \sum_{l=1}^L \hat{f}(X_j, g_\phi(X_j, \epsilon_{jl})) \right)^2\\
&+\frac{2}{n}\sum_{i=1}^n \hat{\omega}(X_i, A_i)\left( \frac{1}{L} \sum_{l=1}^L \hat{f}(X_i, \hat{g}(X_i, \epsilon_{il})) \right)\big[Y_i - \hat{f}(X_i, A_i)\big] \\
&+ \lambda \big(\Delta En(g_\phi, \hat{g}^{\mathcal{S}}) - \delta\big),
\end{split}
\label{eq:debiased-lagrangian}
\end{equation}
where $\hat{\omega}(x,a)$ is the density ratio~\eqref{eq:app-omega} estimated from the preliminary generator $\hat{g}$. 
Optimization proceeds by the same primal-dual scheme as in Stage~2 (Section~\ref{sec:three-stage}). The full fold-cycling procedure is given in Algorithm~\ref{alg:proposed-Debiased}.

\section{Neyman Orthogonality of the Pseudo-Outcome}
\label{app:neyman}

We show that the moment underlying the bias-corrected estimator is Neyman-orthogonal with respect to the outcome-regression nuisance $\bar{f}$, in the sense of \citet{chernozhukov2016double}: the Gateaux derivative of the moment with respect to $\bar{f}$, evaluated at the truth, vanishes. Throughout, the density ratio $\omega$ and the generator $g$ are held fixed at their population values, and orthogonality is established with respect to first-order perturbations of $\bar{f}$, which is the nuisance estimated in the higher-dimensional $(x,a)$ space and the primary source of plug-in bias (Section~\ref{sec.biascorrection}).

Let $S \in \{0,1\}$ be the source indicator, with $\Pr(S=1) = n/(n+N)$, and let $r = n/N$. 
Conditional on $S=1$, $(X,A) \sim \mathbb{P}^{\mathcal{S}}_{X,A}$; conditional on $S=0$, $X \sim \mathbb{Q}_X$. For the fixed generator $g$, let $\mathbb{Q}^{g}_{X,A}$ denote the induced joint law with $X \sim \mathbb{Q}_X$ and $A = g(X,\epsilon)$, $\epsilon \sim \mathcal{N}(0,I_q)$, and let
\begin{equation}
\omega(x,a) = \frac{d\mathbb{Q}^{g}_{X,A}}{d\mathbb{P}^{\mathcal{S}}_{X,A}}(x,a)
\label{eq:app-omega-def}
\end{equation}
be the density ratio, which satisfies the change-of-measure identity: for any integrable $\varphi$,
\begin{equation}
\mathbb{E}\big[\, \omega(X,A)\,\varphi(X,A) \,\big|\, S=1 \,\big]
= \mathbb{E}_{(X,A) \sim \mathbb{Q}^{g}_{X,A}}\big[\, \varphi(X,A) \,\big].
\label{eq:app-change-measure}
\end{equation}
The estimand is the robust predictor $m(x) = \mathbb{E}_{\epsilon}[\bar{f}(x, g(x,\epsilon))]$. The bias-corrected estimator regresses on $X$ the pseudo-outcome
\begin{equation}
F(\bar{f})
= S\,\omega(X,A)\big[\, Y - \bar{f}(X,A) \,\big]
  \;+\; r\,(1-S)\,\mathbb{E}_{\epsilon}\big[\, \bar{f}(X, g(X,\epsilon)) \,\big],
\label{eq:app-F}
\end{equation}
in which the dependence on the nuisance $\bar{f}$ is made explicit. We study the population moment
\begin{equation}
M(\bar{f}) = \mathbb{E}\big[\, F(\bar{f}) \,\big],
\label{eq:app-moment}
\end{equation}
the expectation targeted by the regression of $F$ on $X$.

\begin{Theorem*}[Neyman orthogonality with respect to $\bar{f}$]
\label{thm:neyman}
Suppose $\bar{f}$ is bounded and $\omega$ exists and is integrable under $\mathbb{P}^{\mathcal{S}}_{X,A}$, the standard overlap condition ensuring that the target-generated support is dominated by the source. 
Then the moment~\eqref{eq:app-moment} is Neyman-orthogonal with respect to $\bar{f}$: for every bounded direction $h(x,a)$, the Gateaux derivative of $M$ along $\bar{f} \mapsto \bar{f} + t\,h$, evaluated at the true $\bar{f}$, vanishes,
\begin{equation}
\left. \frac{\partial}{\partial t} \, M(\bar{f} + t\,h) \right|_{t=0} = 0 .
\label{eq:app-orthogonality}
\end{equation}
Consequently, first-order errors in estimating $\bar{f}$ do not contribute to the first-order bias of the population moment, with the generator $g$ and density ratio $\omega$ held fixed.
\end{Theorem*}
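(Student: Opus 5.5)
We compute the Gateaux derivative directly, exploiting the fact that $F(\bar f)$ is affine in $\bar f$. Replacing $\bar f$ by $\bar f + t h$ in~\eqref{eq:app-F} gives
\[
F(\bar f + t h) = F(\bar f) - t\Big\{ S\,\omega(X,A)\,h(X,A) - r(1-S)\,\mathbb{E}_\epsilon\big[h(X,g(X,\epsilon))\big]\Big\}.
\]
Hence, once we justify taking expectations term by term, $t \mapsto M(\bar f + t h)$ is affine in $t$. Its derivative at $t=0$, which in fact holds for every $t$, is
\[
-\Pr(S=1)\,\mathbb{E}\big[\omega(X,A)h(X,A)\mid S=1\big] + r\Pr(S=0)\,\mathbb{E}_{X\sim\mathbb{Q}_X}\mathbb{E}_\epsilon\big[h(X,g(X,\epsilon))\big].
\]
Integrability follows from the hypotheses. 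Since $h$ is bounded and $\omega$ is integrable under $\mathbb{P}^{\mathcal S}_{X,A}$, the term $\omega h$ is integrable. The term $\mathbb{E}_\epsilon[h(X,g(X,\epsilon))]$ is bounded. The boundedness of $\bar f$, together with the integrability of $\omega Y$, makes $M(\bar f)$ itself finite. For the latter we would assume, or note, that $Y$ has a bounded conditional mean, so that $\mathbb{E}[\omega Y\mid S=1]=\mathbb{E}[\omega\bar f\mid S=1]$ by iterated expectations.

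The key step is the change-of-measure identity~\eqref{eq:app-change-measure} with $\varphi = h$. It rewrites the source term as
\[
\mathbb{E}_{(X,A)\sim\mathbb{Q}^g_{X,A}}[h(X,A)] = \mathbb{E}_{X\sim\mathbb{Q}_X}\mathbb{E}_\epsilon[h(X,g(X,\epsilon))].
\]
This equality holds by the definition of $\mathbb{Q}^g_{X,A}$ as the law of $(X, g(X,\epsilon))$, so both terms involve the same integral. What remains is the bookkeeping of the weights. With $\Pr(S=1)=n/(n+N)$, $\Pr(S=0)=N/(n+N)$ and $r=n/N$, we have $r\Pr(S=0) = n/(n+N) = \Pr(S=1)$. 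The two contributions therefore cancel exactly, which yields~\eqref{eq:app-orthogonality}. Note that this cancellation does not use the truth of $\bar f$; the moment is insensitive to $\bar f$ in every direction because of its affine structure.

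The main obstacle is less the calculation than its interpretation. The moment is described as targeted by a regression on $X$, so a stronger, conditional-on-$X$ version of orthogonality would be desirable. That version would require $\mathbb{E}[S\omega h\mid X]$ to match $r\,\mathbb{E}[(1-S)\mathbb{E}_\epsilon h\mid X]$, and it holds only if $\omega$ correctly matches the $X$-marginals. We would therefore state the result for the marginal moment $M$ as the theorem does. We would add a remark that the conditional version follows the same argument after multiplying $F$ by an arbitrary bounded test function $\psi(X)$. Applying~\eqref{eq:app-change-measure} with $\varphi=\psi h$ gives the same cancellation for every $\psi$, hence orthogonality of the full regression moment $\mathbb{E}[\psi(X)F(\bar f)]$.
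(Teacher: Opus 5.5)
Your argument is correct and is essentially the paper's proof: you split $M$ by the source indicator, differentiate the two affine-in-$t$ pieces, rewrite the source piece via the change-of-measure identity with $\varphi=h$, and cancel using $r\Pr(S=0)=\Pr(S=1)$. Your side remarks also hold (the derivative vanishes at every $\bar f$, not only at the truth, and the same cancellation with $\varphi=\psi h$ gives orthogonality of $\mathbb{E}[\psi(X)F(\bar f)]$ when $\omega$ is the exact density ratio); one small correction is that finiteness of $\mathbb{E}[\omega Y\mid S=1]$ needs a bound on $\mathbb{E}[|Y|\mid X,A]$, not only on $\bar f$.
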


\begin{proof}
Fix a bounded direction $h$ and write $\bar{f}_t = \bar{f} + t\,h$. Splitting~\eqref{eq:app-moment} by the source indicator and using $\Pr(S=1) = n/(n+N)$,
\begin{equation}
M(\bar{f}_t)
= \underbrace{\Pr(S{=}1)\,\mathbb{E}\big[\, \omega(X,A)\big(Y - \bar{f}_t(X,A)\big) \,\big|\, S=1 \,\big]}_{=:\,M_{\mathcal{S}}(\bar{f}_t)}
\;+\; \underbrace{r\,\Pr(S{=}0)\,\mathbb{E}_{\epsilon}\big[\, \bar{f}_t(X, g(X,\epsilon)) \,\big]}_{=:\,M_{\mathcal{T}}(\bar{f}_t)} .
\label{eq:app-split}
\end{equation}
Both terms are affine in $t$; we differentiate and evaluate at $t=0$.

\emph{Target term.} Since $\bar{f}_t = \bar{f} + t\,h$,
\begin{equation}
\left. \frac{\partial}{\partial t}\, M_{\mathcal{T}}(\bar{f}_t) \right|_{t=0}
= r\,\Pr(S{=}0)\; \mathbb{E}_{\epsilon}\big[\, h(X, g(X,\epsilon)) \,\big]
= r\,\Pr(S{=}0)\; \mathbb{E}_{(X,A)\sim\mathbb{Q}^{g}_{X,A}}\big[\, h(X, A) \,\big].
\label{eq:app-dT}
\end{equation}

\emph{Source term.} Differentiating and then applying the change-of-measure identity~\eqref{eq:app-change-measure} with $\varphi = h$,
\begin{equation}
\left. \frac{\partial}{\partial t}\, M_{\mathcal{S}}(\bar{f}_t) \right|_{t=0}
= -\,\Pr(S{=}1)\, \mathbb{E}\big[\, \omega(X,A)\, h(X,A) \,\big|\, S=1 \,\big]
= -\,\Pr(S{=}1)\, \mathbb{E}_{(X,A)\sim\mathbb{Q}^{g}_{X,A}}\big[\, h(X, A) \,\big].
\label{eq:app-dS}
\end{equation}

\emph{Cancellation.} The two derivatives share the common factor $\mathbb{E}_{\mathbb{Q}^{g}}[h(X,A)]$, and their prefactors coincide by construction of $r = n/N$:
\begin{equation}
r\,\Pr(S{=}0) = \frac{n}{N}\cdot\frac{N}{n+N} = \frac{n}{n+N} = \Pr(S{=}1).
\label{eq:app-balance}
\end{equation}
Adding~\eqref{eq:app-dT} and~\eqref{eq:app-dS},
\begin{equation}
\left. \frac{\partial}{\partial t}\, M(\bar{f}_t) \right|_{t=0}
= \big(r\,\Pr(S{=}0) - \Pr(S{=}1)\big)\, \mathbb{E}_{(X,A)\sim\mathbb{Q}^{g}_{X,A}}\big[\, h(X, A) \,\big]
= 0 .
\label{eq:app-cancel}
\end{equation}
Since $h$ was arbitrary, $M$ is Neyman-orthogonal with respect to $\bar{f}$, establishing~\eqref{eq:app-orthogonality}.
\end{proof}

\noindent
This is the property invoked in Section~\ref{sec.biascorrection}.
Because the first-order dependence of the moment on $\bar{f}$ vanishes at the truth, with $g$ and $\omega$ held fixed, error in estimating $\bar{f}$ enters this moment only through higher-order terms.

\newpage

\section{Detailed Algorithmic Procedures}
\label{app:pseudocode}

\begin{algorithm}[H]
\caption{DRUM (plug-in)}
\label{alg:local}
\SetAlgoLined
\DontPrintSemicolon
\SetKwInOut{Input}{Input}
\SetKwInOut{Output}{Output}
\Input{Source data $\mathcal{D}_{\mathcal{S}}=\{(X_i, A_i, Y_i)\}_{i=1}^n$; target covariates $\mathcal{D}_{\mathcal{T}}=\{X_i\}_{i=n+1}^{n+N}$;
learning rates $\eta_f, \eta_{g^{\mathcal{S}}}, \eta_\phi, \eta_\lambda$; initial dual variable $\lambda_0$;
epochs $E_f, E_{g^{\mathcal{S}}}$; finetune steps $T$; 
Monte Carlo samples $L$; uncertainty radius $\delta$.}
\Output{Robust predictor $\hat{m}(\cdot)$.}
\BlankLine
\tcp{Stage 1: Estimate conditional mean $\bar{f}$}
Initialize neural network parameters $\psi$\;
\For{epoch $= 1$ to $E_f$}{
    \For{mini-batch $\{(X_i, A_i, Y_i)\}$ from $\mathcal{D}_{\mathcal{S}}$}{
        $\mathcal{L}_f(\psi) \leftarrow \frac{1}{B} \sum (Y_i - f_{\psi}(X_i, A_i))^2$\;
        $\psi \leftarrow \psi - \eta_f \nabla_{\psi} \mathcal{L}_f(\psi)$\;
    }
}
Freeze $\hat{\psi}$\;
\BlankLine
\tcp{Stage 2a: Learn source conditional generator $g^{\mathcal{S}}$}
Initialize neural network parameters $\phi$\;
\For{epoch $= 1$ to $E_{g^{\mathcal{S}}}$}{
    \For{mini-batch $\{(X_i, A_i)\}$ from $\mathcal{D}_{\mathcal{S}}$}{
        Sample $\epsilon_i, \epsilon'_i \stackrel{\text{iid}}{\sim} \mathcal{N}(0, I_q)$\;
        $\hat{A}_i \leftarrow g^{\mathcal{S}}_{\phi}(X_i, \epsilon_i)$, \quad 
        $\hat{A}'_i \leftarrow g^{\mathcal{S}}_{\phi}(X_i, \epsilon'_i)$\;
        $En(\phi) \leftarrow \frac{1}{B} \sum \|A_i - \hat{A}_i\|_2 
            - \frac{1}{2B} \sum \|\hat{A}_i - \hat{A}'_i\|_2$\;
        $\phi \leftarrow \phi - \eta_{g^{\mathcal{S}}} \nabla_{\phi} En(\phi)$\;
    }
}
Store $\phi^{\mathcal{S}} \leftarrow \phi$\;
\BlankLine
\tcp{Stage 2b: Solve energy-constrained optimization}
$\phi^* \leftarrow$ \textbf{Algorithm~\ref{alg:energy-opt}}$(f_{\hat{\psi}}, g^{\mathcal{S}}_{\phi^{\mathcal{S}}}, \mathcal{D}_{\mathcal{S}}, \mathcal{D}_{\mathcal{T}}, \delta, T)$\;
\BlankLine
\tcp{Stage 3: Prediction}
\SetKwFunction{FPredict}{Predict}
\SetKwProg{Fn}{Function}{:}{}
\Fn{\FPredict{$x$}}{
    Sample $\{\epsilon_l\}_{l=1}^L \stackrel{\text{iid}}{\sim} \mathcal{N}(0, I_q)$\;
    \Return $\hat{m}(x) = \frac{1}{L} \sum_{l=1}^L f_{\hat{\psi}}(x, g_{\phi^*}(x, \epsilon_l))$\;
}
\end{algorithm}

\begin{algorithm}[H]
\caption{Energy-Constrained Optimization (Primal-Dual)}
\label{alg:energy-opt}
\SetAlgoLined
\DontPrintSemicolon
\SetKwInOut{Input}{Input}
\SetKwInOut{Output}{Output}
\Input{Predictor $f_{\hat{\psi}}$; source generator $g^{\mathcal{S}}_{\phi^{\mathcal{S}}}$; 
source data $\mathcal{D}_{\mathcal{S}}$; 
target covariates $\mathcal{D}_{\mathcal{T}}$; uncertainty radius $\delta$; 
learning rates $\eta_\phi, \eta_\lambda$; initial dual variable $\lambda_0$; steps $T$; 
Monte Carlo samples $L$.}
\Output{Optimal generator parameters $\phi^*$.}
\BlankLine
Initialize $\phi \leftarrow \phi^{\mathcal{S}}$ \tcp*{Initialize from source generator}
Initialize $\lambda \leftarrow \lambda_0$\;
\BlankLine
\For{step $= 1$ to $T$}{
    \tcp{Primal objective on target data}
    Sample $\{\epsilon_{il}\}_{l=1}^{L} \stackrel{\text{iid}}{\sim} \mathcal{N}(0, I_q)$ for each $i \in \mathcal{D}_{\mathcal{T}}$\;
    $\mathcal{L}_{\text{obj}} \leftarrow \frac{1}{N} \sum_{i \in \mathcal{D}_{\mathcal{T}}} \left(\frac{1}{L} \sum_{l=1}^{L} f_{\hat{\psi}}(X_i, g_{\phi}(X_i, \epsilon_{il}))\right)^2$\;
    
    \BlankLine
    \tcp{Energy gap on source data}
    Sample $\epsilon_i, \epsilon'_i \stackrel{\text{iid}}{\sim} \mathcal{N}(0, I_q)$ for each $i \in \mathcal{D}_{\mathcal{S}}$\;
    $\Delta En \leftarrow En(g_\phi) - En(g^{\mathcal{S}}_{\phi^{\mathcal{S}}})$ \tcp*{Eq.~\eqref{eq:energy-score}}
    
    \BlankLine
    \tcp{Primal-dual updates}
    $\mathcal{L}(\phi, \lambda) \leftarrow \mathcal{L}_{\text{obj}} + \lambda \cdot (\Delta En - \delta)$\;
    $\phi \leftarrow \phi - \eta_\phi \nabla_{\phi} \mathcal{L}(\phi, \lambda)$\;
    $\lambda \leftarrow \max(0,\; \lambda + \eta_\lambda \cdot (\Delta En - \delta))$\;
}
\Return $\phi$\;
\end{algorithm}

\begin{algorithm}[H]
\caption{DRUM (Debiased)}
\label{alg:proposed-Debiased}
\SetAlgoLined
\DontPrintSemicolon
\SetKwInOut{Input}{Input}
\SetKwInOut{Output}{Output}
\Input{Source data $\mathcal{D}_\mathcal{S}=\{(X_i, A_i, Y_i)\}_{i=1}^n$; target data $\mathcal{D}_\mathcal{T}=\{X_i\}_{i=n+1}^{n+N}$; preliminary generator $\hat{g}$ and source generator $g^{\mathcal{S}}$ from Algorithm~\ref{alg:local}; learning rates $\eta_f, \eta_\omega, \eta_\phi, \eta_\lambda, \eta_F$; epochs $E_f, E_\omega, E_F$; finetune steps $T_g$; initial dual variable $\lambda_0$; Monte Carlo samples $L$; energy budget $\delta$.}
\Output{$\hat{m}^{\mathrm{deb}}(\cdot)$.}
\BlankLine
\tcp{Partition data into 3 folds}
Partition $\mathcal{D}_\mathcal{S}$ into folds $\mathcal{I}_1, \mathcal{I}_2, \mathcal{I}_3$\;
Partition $\mathcal{D}_\mathcal{T}$ into folds $\mathcal{J}_1, \mathcal{J}_2, \mathcal{J}_3$\;
\tcp{All fold indices below are interpreted modulo 3}
\BlankLine
\tcp{Phase 1: Cross-fitted $\hat{f}$ and residuals}
\For{$k = 1, 2, 3$ \textnormal{(indices mod 3)}}{
    Train $\hat{f}^{(k)}$ on $\mathcal{I}_k$ using MSE loss (or BCE for binary outcomes)\;
    Compute $\hat{R}_i \leftarrow Y_i - \hat{f}^{(k)}(X_i, A_i)$ for $i \in \mathcal{I}_{k+1}$\;
}
\BlankLine
\tcp{Phase 2: Cross-fitted density ratio $\hat{\omega}$}
\For{$k = 1, 2, 3$}{
    Generate $\tilde{A}_i \leftarrow \hat{g}(X_i, \epsilon_i)$, $\epsilon_i \sim \mathcal{N}(0, I_q)$ for $i \in \mathcal{J}_k$\;
    Train classifier $\hat{p}^{(k)}$ on $\{(X_i, A_i, 1)\}_{i \in \mathcal{I}_k} \cup \{(X_i, \tilde{A}_i, 0)\}_{i \in \mathcal{J}_k}$\;
    Compute $\hat{\omega}_i \leftarrow \frac{1 - \hat{p}^{(k)}(X_i, A_i)}{\hat{p}^{(k)}(X_i, A_i)}$ for $i \in \mathcal{I}_{k+1}$\;
    Normalize: $\hat{\omega}_i \leftarrow \hat{\omega}_i \,/\, \mathrm{mean}(\{\hat{\omega}_i\}_{i \in \mathcal{I}_{k+1}})$\;
}
\BlankLine
\tcp{Phase 3: Cross-fitted debiased generator}
\For{$k = 1, 2, 3$}{
    Using $\hat{f}^{(k+1)}$, the density-ratio classifier $\hat{p}^{(k+1)}$, and source data $\mathcal{I}_k$, target data $\mathcal{J}_k$:\;
    Train $\hat{g}^{(k)}_{\mathrm{deb}}$ by minimizing~\eqref{eq:debiased-lagrangian} via primal-dual updates for $T_g$ steps\;
}
\BlankLine
\tcp{Phase 4: Cross-fitted target predictions}
\For{$k = 1, 2, 3$}{
    $\hat{F}_i \leftarrow \frac{1}{L}\sum_{l=1}^{L} \hat{f}^{(k+2)}(X_i, \hat{g}^{(k+1)}_{\mathrm{deb}}(X_i, \epsilon_{il}))$ for $i \in \mathcal{J}_k$\;
}
\BlankLine
\tcp{Phase 5: Assemble pseudo-outcomes and final estimator}
$r \leftarrow n / N$\;
$F_i \leftarrow \hat{\omega}_i \cdot \hat{R}_i$ for $i \in \mathcal{D}_\mathcal{S}$ \tcp*{source}
$F_i \leftarrow r \cdot \hat{F}_i$ for $i \in \mathcal{D}_\mathcal{T}$ \tcp*{target}
Train $\hat{m}^{\mathrm{deb}}(x)$ by regressing $\{F_i\}_{i=1}^{n+N}$ on $\{X_i\}_{i=1}^{n+N}$ for $E_F$ epochs\;
\end{algorithm}

\newpage
\section{Experiment Details}
\subsection{Simulation Details}\label{app:simulation-details}

\subsubsection{Setting I.}
\label{append.simI}
\paragraph{Parameter Details}
All three simulation settings share a common base coefficient matrix $\bar{B} \in \mathbb{R}^{15 \times 10}$. Each setting uses the first $d_A$ columns: $B = \bar{B}_{:,\,1:d_A}$.

\begin{equation*}
\bar{B} = 
\begin{pmatrix}
1.0 & 0.5 & 0.3 & 0.2 & 0.1 & 0.4 & 0.2 & 0.1 & 0.3 & 0.2 \\
0.4 & 1.0 & 0.4 & 0.1 & 0.2 & 0.3 & 0.5 & 0.2 & 0.1 & 0.1 \\
0.5 & 0.3 & 1.0 & 0.5 & 0.3 & 0.2 & 0.1 & 0.4 & 0.2 & 0.3 \\
0.3 & 0.2 & 0.6 & 1.0 & 0.4 & 0.1 & 0.3 & 0.5 & 0.1 & 0.2 \\
0.2 & 0.4 & 0.2 & 0.3 & 1.0 & 0.3 & 0.2 & 0.1 & 0.5 & 0.4 \\
0.1 & 0.1 & 0.3 & 0.2 & 0.5 & 1.0 & 0.4 & 0.3 & 0.2 & 0.1 \\
0.0 & 0.2 & 0.1 & 0.4 & 0.3 & 0.3 & 1.0 & 0.2 & 0.4 & 0.2 \\
0.1 & 0.0 & 0.2 & 0.1 & 0.2 & 0.2 & 0.3 & 1.0 & 0.1 & 0.3 \\
0.0 & 0.2 & 0.0 & 0.2 & 0.1 & 0.1 & 0.2 & 0.3 & 1.0 & 0.2 \\
0.0 & 0.3 & 0.0 & 0.0 & 0.1 & 0.2 & 0.1 & 0.2 & 0.3 & 1.0 \\
0.3 & 0.2 & 0.2 & 0.0 & 0.4 & 0.1 & 0.0 & 0.1 & 0.2 & 0.3 \\
0.1 & 0.4 & 0.0 & 0.3 & 0.1 & 0.2 & 0.1 & 0.0 & 0.1 & 0.2 \\
0.0 & 0.2 & 0.3 & 0.1 & 0.2 & 0.0 & 0.2 & 0.1 & 0.0 & 0.1 \\
0.2 & 0.0 & 0.1 & 0.4 & 0.0 & 0.1 & 0.0 & 0.2 & 0.1 & 0.0 \\
0.1 & 0.3 & 0.2 & 0.0 & 0.3 & 0.0 & 0.1 & 0.0 & 0.2 & 0.1
\end{pmatrix}
\label{eq:beta-matrix}
\end{equation*}

\paragraph{Implementation Details}
All hyperparameters were selected via grid search using a held-out validation set comprising 20\% of the source data (because only source data are labeled). For each method and each hyperparameter configuration, models were trained on the remaining 80\% of source data, and the configuration minimizing validation mean squared error was selected. The tuning was performed once on the source data and the selected hyperparameters were held fixed across all perturbation scales, Monte Carlo test datasets, and training replications.

\begin{table}[!htbp]
\centering
\small
\begin{tabular}{lll}
\hline
Method & Parameter & Search range \\
\hline
\multirow{5}{*}{DRUM (plug-in)} 
  & $\delta$ & $\{0.15, 0.2, 0.3\}$ \\
  & $\eta_\phi$ & $\{10^{-5}, 10^{-4}, 2 \times 10^{-4}, 10^{-3}\}$ \\
  & $\eta_\lambda$ & $\{10^{-5}, 10^{-4}, 10^{-3}\}$ \\
  & hidden dim & $\{4, 8, 16, 32\}$ \\
  & noise dim & $\{4, 8, 16, 32\}$ \\
\hline
\multirow{4}{*}{DRUM (debiased)} 
  & $\eta_\omega$ & $\{10^{-5}, 10^{-4}, 10^{-3}\}$ \\
  & $\eta_F$ & $\{10^{-5}, 10^{-4}, 10^{-3}\}$ \\
  & $E_\omega$ & $\{100, 200, 300\}$ \\
  & $E_F$ & $\{50, 100, 200, 300\}$ \\
\hline
\multirow{2}{*}{\shortstack[l]{Baseline-ERM, Baseline-IW-KMM,\\Baseline-IW-Classify, Baseline-PL+ERM}} 
  & learning rate & $\{5 \times 10^{-5}, 10^{-4}, 5 \times 10^{-4}, 10^{-3}\}$ \\
  & epochs & $\{20, 30, 50\}$ \\
\hline
\multirow{3}{*}{\shortstack[l]{Baseline-DRO,\\Baseline-PL+DRO}} 
  & learning rate & $\{5 \times 10^{-5}, 10^{-4}, 5 \times 10^{-4}, 10^{-3}\}$ \\
  & epochs & $\{20, 30, 50\}$ \\
  & $\rho$ & $\{0.25, 0.5, 0.75, 1.0, 1.25, 1.5\}$ \\
\hline
\end{tabular}
\caption{Hyperparameter search grids for all methods in Setting~I. ERM-based methods search over 12 configurations; DRO-based methods search over 72 configurations.}
\label{tab:tuning-grids-settingI}
\end{table}

\textit{DRUM methods.} For the conditional mean estimator $f_{\hat\psi}$ (shared across DRUM variants): learning rate $\eta_f = 10^{-5}$, $E_f = 100$ epochs, two hidden layers of width 128 with ReLU activations.
For DRUM (plug-in): source Engression trained with learning rate $\eta_{g^{\mathcal{S}}} = 5 \times 10^{-4}$, hidden dimension 16, noise dimension 32, $E_{g^{\mathcal{S}}} = 500$ epochs; energy-constrained optimization with primal learning rate $\eta_\phi = 2 \times 10^{-4}$, dual learning rate $\eta_\lambda = 10^{-4}$, gradient clipping at 2.0, $\delta = 0.3$, 80 fine-tuning steps.
For DRUM (debiased): density ratio classifier learning rate $\eta_\omega = 10^{-5}$, $E_\omega = 100$ epochs; final estimator learning rate $\eta_F = 10^{-5}$, $E_F = 200$ epochs; three-fold cross-fitting for all nuisance estimates.

\textit{Standard baselines.} For Baseline-ERM: selected learning rate $10^{-3}$, 20 epochs. For Baseline-DRO~\citep{duchi2021learning}: selected learning rate $5 \times 10^{-4}$, 50 epochs, $\rho = 0.25$. Both use the same architecture as $f_{\hat\psi}$ but take only $X$ as input.

\textit{Importance weighting baselines.} For IW-KMM, density ratio weights were estimated via kernel mean matching~\citep{huang2006correcting} using a Gaussian kernel with bandwidth selected by the median heuristic, upper bound $B = 1{,}000$, and constraint tolerance $\epsilon = (\sqrt{m} - 1)/\sqrt{m}$. For IW-Classify, density ratio weights were estimated via a logistic regression classifier trained to distinguish source from target observations using $X$ only. Both methods produce per-observation weights $\hat{w}(x_i)$ that are used to train a weighted MSE regression on source data. The effective sample size (ESS $= (\sum_i w_i)^2 / \sum_i w_i^2$, measuring the effective number of independent observations after reweighting) was $1{,}576/4{,}000$ for KMM and $3{,}539/4{,}000$ for the classifier, indicating moderate weight concentration. Selected hyperparameters: IW-KMM (lr $= 10^{-3}$, 30 epochs), IW-Classify (lr $= 5 \times 10^{-4}$, 50 epochs).

\textit{Pseudo-label baselines.} Target outcomes $Y$ were treated as missing and imputed using source $Y \mid X$ relationships via three methods: mean imputation, MICE (iterative imputation with Bayesian ridge regression, 20 iterations), and MissForest (iterative imputation with random forests, 100 trees, max depth 10, 25 iterations). The imputed target observations were pooled with source data and used to train $Y \sim X$ models via ERM or DRO, yielding six variants. The imputation hyperparameters were set to standard defaults; tuning these would further favor the pseudo-label baselines.

All models were trained with the Adam optimizer and batch size 128. Prediction samples at test time: $L = 256$.
Table~\ref{tab:tuning-grids-settingI} summarizes the hyperparameter search ranges for each method.

\subsubsection{Setting II}
\label{append.simII}

DRUM methods use the same hyperparameter search procedure as Setting~I. For a simplified fair comparison, the two original baselines use fixed hyperparameters: Baseline-ERM with learning rate $10^{-3}$, 50 epochs; Baseline-DRO with learning rate $10^{-3}$, 50 epochs, $\rho = 0.5$. Both use the same architecture as $f_{\hat\psi}$ but take only $X$ as input.

For the additional baselines, the same tuning grids as Setting~I (Table~\ref{tab:tuning-grids-settingI}) were used, with hyperparameters selected by validation MSE on held-out source data. Selected hyperparameters: IW-KMM (lr $= 10^{-3}$, 50 epochs), IW-Classify (lr $= 10^{-3}$, 50 epochs), PL-Mean+ERM (lr $= 10^{-3}$, 20 epochs), PL-Mean+DRO (lr $= 10^{-3}$, 20 epochs, $\rho = 0.25$), PL-MICE+ERM (lr $= 10^{-3}$, 30 epochs), PL-MICE+DRO (lr $= 10^{-3}$, 30 epochs, $\rho = 0.25$), PL-MF+ERM (lr $= 10^{-3}$, 50 epochs), PL-MF+DRO (lr $= 10^{-3}$, 50 epochs, $\rho = 0.25$). All DRO pseudo-label variants selected the minimum $\rho = 0.25$, indicating negligible benefit from the DRO penalty on pooled imputed data.

\subsubsection{Setting III}
\label{appendix:settingIII}

\begin{table}[!htbp]
\centering
\small
\begin{tabular}{lll}
\hline
Method & Parameter & Search range \\
\hline
\multirow{3}{*}{Baseline-ERM} 
  & architecture & $[128,128],\; [256,128],\; [128,128,64],\; [256,256,128]^*$ \\
  & learning rate & $\{5 \times 10^{-5}, 10^{-4}, 5 \times 10^{-4}, 10^{-3}\}$ \\
  & epochs & $\{20, 30, 50\}$ \\
\hline
\multirow{4}{*}{Baseline-DRO} 
  & architecture & $[128,128],\; [256,128],\; [128,128,64],\; [256,256,128]^*$ \\
  & learning rate & $\{5 \times 10^{-5}, 10^{-4}, 5 \times 10^{-4}, 10^{-3}\}$ \\
  & epochs & $\{20, 30, 50\}$ \\
  & $\rho$ & $\{0.25, 0.5, 0.75, 1.0, 1.25, 1.5\}$ \\
\hline
\multirow{2}{*}{All ERM baselines$^\dagger$} 
  & learning rate & $\{5 \times 10^{-5}, 10^{-4}, 5 \times 10^{-4}, 10^{-3}\}$ \\
  & epochs & $\{20, 30, 50\}$ \\
\hline
\multirow{3}{*}{All DRO baselines$^\dagger$} 
  & learning rate & $\{5 \times 10^{-5}, 10^{-4}, 5 \times 10^{-4}, 10^{-3}\}$ \\
  & epochs & $\{20, 30, 50\}$ \\
  & $\rho$ & $\{0.25, 0.5, 0.75, 1.0, 1.25, 1.5\}$ \\
\hline
\end{tabular}
\caption{Hyperparameter search grids for Setting~III. Baseline-ERM and Baseline-DRO include architecture search; all other baselines use a fixed $[128, 128]$ architecture. $^*[256,256,128]$ included only for $d_A \geq 8$. $^\dagger$IW-KMM, IW-Classify, and all PL variants.}
\label{tab:tuning-grids-III}
\end{table}

For the DRUM variants, all hyperparameters are kept fixed (identical to Setting~II) across $d_A$, adjusting only the generator latent dimension as $q = \max(4, d_A)$. For the original baselines (ERM and DRO), a grid search over network architecture, learning rate, and regularization parameters is performed at each $d_A$ to ensure a fair comparison as problem complexity increases (Table~\ref{tab:tuning-grids-III}). For the additional baselines (IW-KMM, IW-Classify, and all pseudo-label variants), the same tuning grids as Settings~I and~II (Table~\ref{tab:tuning-grids-settingI}) were used with a fixed $[128, 128]$ architecture, and hyperparameters were re-tuned at each $d_A$. Across all $d_A$ values, the IW and pseudo-label baselines consistently selected lr $= 10^{-3}$ with epochs between 20 and 50; all DRO pseudo-label variants selected $\rho = 0.25$ (with a single exception of $\rho = 0.5$ for PL-MICE+DRO at $d_A = 9$), suggesting the negligible benefit of the DRO penalty observed in Settings~I and~II.

\subsection{Real-world Data Experiment details}
\label{app:realdat-details}

\paragraph{Data Details}
The source population comprised out-of-hospital cardiac arrest (OHCA) patients treated by emergency medical services (EMS) providers, as recorded in the Resuscitation Outcomes Consortium (ROC) Cardiac Epidemiologic Registry (Epistry) (Version 3, covering the period from April 1, 2011, to June 30, 2015). The ROC, a North American database established in 2004, aims to advance clinical research on cardiopulmonary arrest. Ethical approval was obtained from the National University of Singapore Institutional Review Board (IRB), which granted an exemption for this study (IRB Reference Number: NUS-IRB-2023-451).
The Pan-Asian Resuscitation Outcomes Study (PAROS) data were approved by the relevant ethics committees at each participating site and by the Centralized Institutional Review Board and Domain Specific Review Board for Singapore (reference numbers: 2010-270-C, C-10-545, 2013/604/C, and 2013/00929). Informed consent was waived for both datasets due to the retrospective, observational nature of the study, and all data were de-identified prior to analysis.
The target population comprised adult OHCA patients recorded in the Singapore PAROS registry from April 2010 to December 2016. The external validation cohorts comprised adult OHCA patients recorded in PAROS from January 2011 to December 2014 in South Korea and January 2009 to December 2014 in Japan.

For cohort formation, we retained adult patients (aged 18 and older) with known sex, determinable initial cardiac rhythm, and non-missing values for all shared covariates. Continuous variables were standardized using source population statistics. 
We note that EMS response time is recorded at some PAROS sites, though not uniformly across all deployment populations. Because epinephrine dose and blood pH are consistently absent across all PAROS sites, and our framework requires a common set of structurally missing covariates across deployment populations, we treat all three prehospital variables as structurally missing for consistency.

Table~\ref{tab:baseline} presents baseline characteristics across all study populations. Several notable differences are apparent. The source population (US-ROC) was younger on average (mean age $63.8$ years) than the Asian cohorts, with Japan the oldest (mean age $72.4$ years). The rate of shockable initial rhythm was highest in the US source ($43.4\%$) and considerably lower across the Asian cohorts ($7.0$--$17.4\%$). Favorable neurological outcome rates varied substantially, from $19.3\%$ in the US source to $3.1\%$ in the target Singapore population and $3.4$--$4.2\%$ across the external cohorts. These differences in outcome rates and covariate distributions illustrate the challenge of cross-regional clinical prediction and motivate transfer learning methods robust to population shift.

\textbf{Source data diagnostic: $A \mid X$ dependence and predictive value.}
The relationship between the structurally missing prehospital care variables $A$ and the stable covariates $X$ in the US source data is summarized as follows. 
The conditional dependence of $A$ on $X$ is weak: the mean $R^2$ from regressing each $A_j$ on all $X$ is 0.035, and random forest nonlinear prediction yields negative $R^2$, indicating that $A$ is weakly predictable from $X$ using the evaluated models. 
By contrast, $A$ contributes substantially to outcome prediction: including $A$ improves AUROC by approximately 0.07 across architectures, and SHAP analysis attributes 40\% of total feature importance to $A$. 
This combination of weak $A \mid X$ dependence and strong predictive value motivates the use of DRUM: the structurally missing variables carry information that is not readily recovered from $X$ alone, yet their distribution in the target population is unknown.

\begin{table}[!htbp]
\centering
\caption{Hyperparameter search grids for the OHCA application. All hyperparameters were selected by minimizing Brier score on held-out validation data.}
\label{tab:hp-realdata}
\footnotesize
\setlength{\tabcolsep}{4pt}
\begin{tabular}{llll}
\toprule
Component & Parameter & Search range & Selected \\
\midrule
\multirow{3}{*}{Outcome model $\hat{f}$}
& architecture & $[16],\; [32],\; [16,16],\; [32,32],\; [64,64],\; [128,128]$ & $[32]$ \\
& learning rate & $\{10^{-4}, 5 \times 10^{-4}, 10^{-3}, 5 \times 10^{-3}\}$ & $5 \times 10^{-3}$ \\
& epochs & $\{50, 100, 200\}$ & $50$ \\
\midrule
\multirow{7}{*}{DRUM generator}
& $\delta$ & $\{0.1, 0.3, 0.5\}$ & $0.3$ \\
& hidden dim & $\{4, 16\}$ & $16$ \\
& noise dim & $\{4, 8\}$ & $4$ \\
& $\text{lr}_{\text{primal}}$ & $\{5 \times 10^{-6}, 10^{-5}, 5 \times 10^{-5}\}$ & $10^{-5}$ \\
& $\text{lr}_{\text{dual}}$ & $\{5 \times 10^{-5}, 10^{-4}\}$ & $10^{-4}$ \\
& engression epochs & $\{50, 100, 200\}$ & $200$ \\
& finetune steps & $\{80, 100, 150\}$ & $150$ \\
\midrule
\multirow{4}{*}{DRUM debiasing}
& $\text{lr}_{\omega}$ & $\{10^{-4}, 5 \times 10^{-4}\}$ & $5 \times 10^{-4}$ \\
& $\omega$ epochs & $\{50, 100, 200, 300\}$ & $200$ \\
& $F_x$ epochs & $\{10, 20, 30, 40\}$ & $40$ \\
& $\text{lr}_{F_x}$ & $\{10^{-4}\}$ & $10^{-4}$ \\
\midrule
\multirow{3}{*}{Baseline-ERM}
& architecture & $[32],\; [64],\; [32, 32]$ & $[64]$ \\
& learning rate & $\{10^{-3}, 5 \times 10^{-3}\}$ & $5 \times 10^{-3}$ \\
& epochs & $\{30, 50, 100\}$ & $100$ \\
\midrule
\multirow{4}{*}{Baseline-DRO}
& architecture & $[32],\; [64],\; [32, 32]$ & $[64]$ \\
& learning rate & $\{10^{-3}, 5 \times 10^{-3}\}$ & $5 \times 10^{-3}$ \\
& epochs & $\{30, 50, 100\}$ & $100$ \\
& $\rho$ & $\{0.25, 0.5, 1.0\}$ & $0.25$ \\
\midrule
\multirow{3}{*}{IW-KMM}
& architecture & $[32],\; [64],\; [32, 32]$ & $[64]$ \\
& learning rate & $\{10^{-3}, 5 \times 10^{-3}\}$ & $5 \times 10^{-3}$ \\
& epochs & $\{30, 50, 100\}$ & $100$ \\
\midrule
\multirow{3}{*}{IW-Classify}
& architecture & $[32],\; [64],\; [32, 32]$ & $[64]$ \\
& learning rate & $\{10^{-3}, 5 \times 10^{-3}\}$ & $5 \times 10^{-3}$ \\
& epochs & $\{30, 50, 100\}$ & $50$ \\
\bottomrule
\end{tabular}
\end{table}

\paragraph{Hyperparameter fine-tuning.}
For DRUM, we use the three-stage estimation procedure described in Section~\ref{sec:estimation}. In Stage~1, the outcome model $\hat{f}$ is trained on source data. In Stage~2, the generator and worst-case search parameters are tuned. 
In the subsequent bias-correction step, the debiasing parameters for the density-ratio estimator and final estimator are tuned with the generator fixed from Stage~2.
The density ratio $\hat{\omega}$ was estimated via probabilistic classification within each cross-fitting fold (Algorithm~\ref{alg:proposed-Debiased}), with weights clipped to a finite upper bound and normalized to unit mean within each fold. 
DRUM is unsupervised in that the estimator, the outcome model, the worst-case generator, and the bias correction all use no target outcome labels and can be run with default hyperparameters. As with most methods, hyperparameter selection benefits from validation; here a small held-out labeled Singapore subset was used solely for final selection by Brier score, while no target labels entered the outcome model, generator, or bias correction. 
For the remaining Singapore observations, the covariates $X$ were used for unsupervised generator training while the outcome labels were held out for evaluation. The entire Japan and Korea cohorts were held out from both training and hyperparameter selection.

Table~\ref{tab:hp-realdata} summarizes the search grids for all methods. 
IW-KMM uses a Gaussian kernel with bandwidth selected by the median heuristic ($\sigma = 1.277$), $B = 1{,}000$. IW-Classify uses logistic regression for domain classification (AUC $= 0.649$).
Pseudo-label baselines (PL-Mean, PL-MICE, PL-MissForest, each with ERM and DRO) were tuned over the same architecture and learning rate grids as Baseline-ERM and Baseline-DRO, with DRO variants additionally searching over $\rho \in \{0.25, 0.5, 1.0\}$. All six variants selected architecture $[64]$ and learning rate $5 \times 10^{-3}$; epochs ranged from 50 to 100 across variants.

\paragraph{Calibration curves for all methods.}
Figure~\ref{fig:cal_all_appendix} presents calibration curves for all 10 baselines and DRUM. The corresponding main-text figure displays only DRUM and Baseline-ERM, the baseline with the lowest ECE at all three sites, because overlaying all 11 methods obscures the individual bin-level patterns.

\begin{figure}[!htbp]
    \centering
    \includegraphics[width=\linewidth]{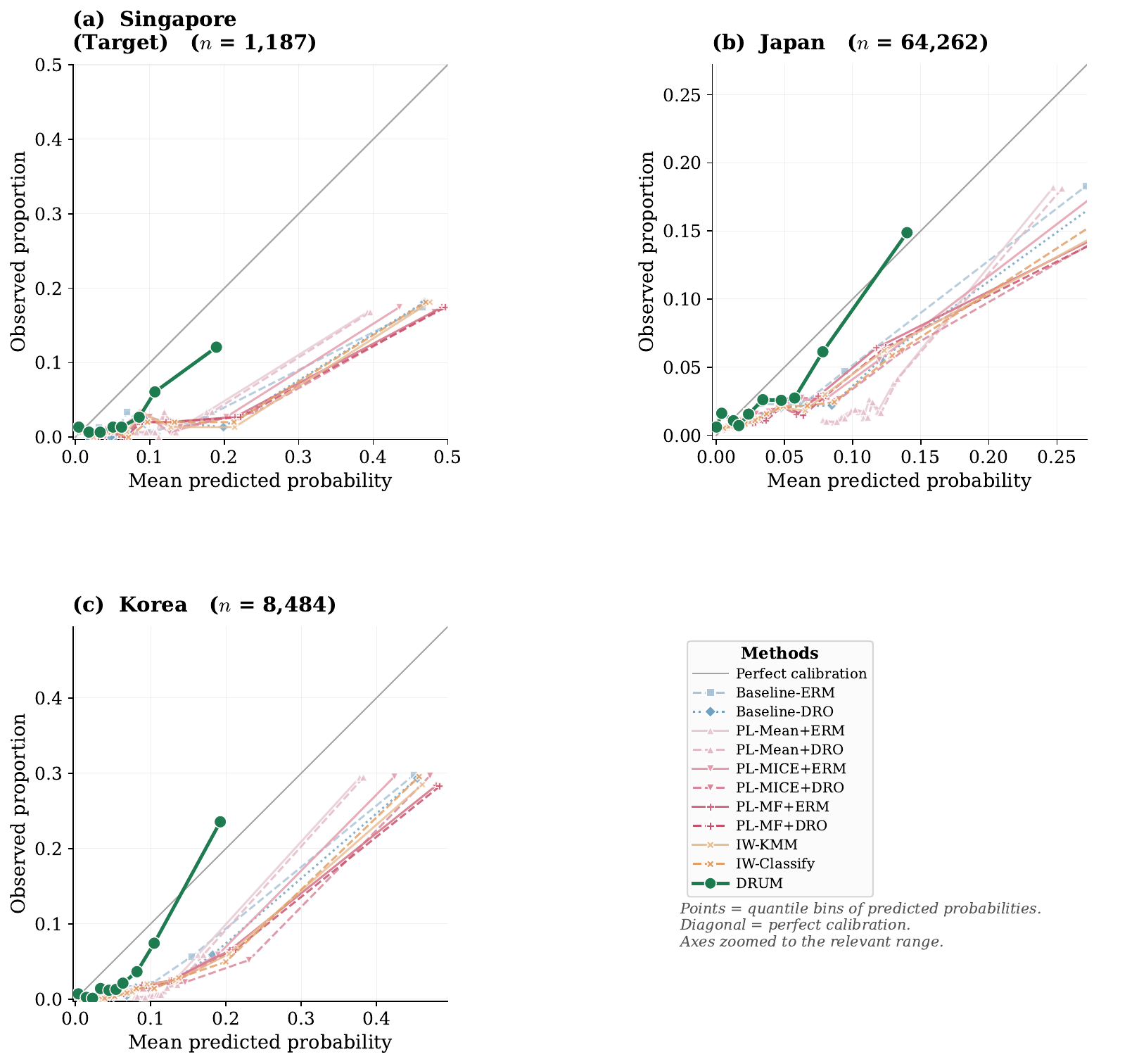}
    \caption{Calibration plots for all 10 baselines and DRUM across three OHCA populations. Points show mean predicted probability versus observed outcome proportion within quantile bins, using 8 bins for Singapore and 10 bins for Japan and Korea. The diagonal line represents perfect calibration, and axes are truncated to the relevant ranges for visibility.}
    \label{fig:cal_all_appendix}
\end{figure}

\paragraph{Discrimination Analysis}

Table~\ref{tab:discrimination_appendix} reports discrimination metrics across all sites. The baselines and DRUM (plug-in) achieve comparable AUROC, consistent with the relative ranking of patient risk by the stable covariates $X$ being largely preserved across populations.
The bias-corrected DRUM shows lower AUROC and AUPRC while achieving the lowest calibration error (Figure~\ref{fig:ohca_ece}), trading some rank-based discrimination for improved calibration.
Because clinical decisions that rely on absolute risk estimates, such as thresholding predicted probability to guide treatment, require calibrated rather than merely well-ranked predictions, this trade-off favors the property of primary clinical relevance.

Table~\ref{tab:fixed-cutoff} evaluates classification performance at fixed probability cutoffs $t \in \{0.03, 0.05\}$, chosen near the observed outcome prevalence in the Asian OHCA populations ($3$--$5\%$) to reflect clinically meaningful operating points. At these thresholds the bias-corrected DRUM is competitive with the best baseline on F1, precision, and specificity across all three sites, and attains the best or tied-best performance in Korea. The lower AUROC of the bias-corrected DRUM therefore does not translate into materially worse classification at the decision thresholds relevant to deployment: because its predicted probabilities are calibrated to the low target prevalence, a fixed clinically motivated cutoff yields classification comparable to the baselines despite the reduction in global rank separation.

\begin{table}[!htbp]
\centering
\caption{AUROC and AUPRC values across three OHCA populations. 95\% bootstrap CIs ($B = 2{,}000$). }
\label{tab:discrimination_appendix}
\footnotesize
\setlength{\tabcolsep}{2pt}
\begin{tabularx}{\textwidth}{ l *{3}{>{\centering\arraybackslash}X} }
\toprule
\multicolumn{4}{c}{\textbf{AUROC}} \\
\midrule
Method & Singapore (Target) & Japan & Korea \\
\midrule
Baseline-ERM         & 0.838 (0.765, 0.904) & 0.810 (0.800, 0.820) & 0.909 (0.893, 0.924) \\
Baseline-DRO         & 0.839 (0.764, 0.904) & 0.809 (0.799, 0.819) & 0.910 (0.895, 0.924) \\
IW-KMM               & 0.838 (0.762, 0.908) & 0.804 (0.794, 0.814) & 0.904 (0.889, 0.919) \\
IW-Classify          & 0.843 (0.768, 0.907) & 0.806 (0.797, 0.816) & 0.906 (0.891, 0.922) \\
PL-Mean+ERM          & 0.829 (0.749, 0.902) & 0.775 (0.763, 0.786) & 0.900 (0.882, 0.918) \\
PL-Mean+DRO          & 0.831 (0.751, 0.902) & 0.778 (0.767, 0.789) & 0.900 (0.882, 0.918) \\
PL-MICE+ERM          & 0.845 (0.777, 0.906) & 0.820 (0.811, 0.829) & 0.910 (0.895, 0.925) \\
PL-MICE+DRO          & 0.846 (0.778, 0.908) & 0.816 (0.807, 0.825) & 0.908 (0.892, 0.922) \\
PL-MF+ERM            & 0.844 (0.772, 0.906) & 0.805 (0.795, 0.815) & 0.903 (0.888, 0.918) \\
PL-MF+DRO            & 0.844 (0.773, 0.907) & 0.805 (0.795, 0.815) & 0.903 (0.888, 0.918) \\
\addlinespace
DRUM (plug-in)       & 0.838 (0.769, 0.903) & 0.816 (0.806, 0.826) & 0.911 (0.896, 0.926) \\
DRUM                 & 0.777 (0.695, 0.854) & 0.769 (0.758, 0.780) & 0.853 (0.833, 0.874) \\
\midrule
\multicolumn{4}{c}{\textbf{AUPRC}} \\
\midrule
Method & Singapore (Target) & Japan & Korea \\
\midrule
Baseline-ERM         & 0.195 (0.124, 0.317) & 0.198 (0.183, 0.214) & 0.349 (0.303, 0.402) \\
Baseline-DRO         & 0.185 (0.120, 0.295) & 0.194 (0.180, 0.210) & 0.341 (0.296, 0.393) \\
IW-KMM               & 0.206 (0.131, 0.336) & 0.172 (0.159, 0.187) & 0.318 (0.276, 0.371) \\
IW-Classify          & 0.190 (0.123, 0.303) & 0.190 (0.175, 0.206) & 0.337 (0.293, 0.391) \\
PL-Mean+ERM          & 0.212 (0.131, 0.354) & 0.191 (0.177, 0.208) & 0.342 (0.299, 0.400) \\
PL-Mean+DRO          & 0.215 (0.131, 0.360) & 0.192 (0.177, 0.208) & 0.340 (0.298, 0.398) \\
PL-MICE+ERM          & 0.192 (0.124, 0.312) & 0.196 (0.182, 0.212) & 0.347 (0.302, 0.401) \\
PL-MICE+DRO          & 0.194 (0.124, 0.312) & 0.192 (0.177, 0.208) & 0.343 (0.299, 0.397) \\
PL-MF+ERM            & 0.171 (0.115, 0.275) & 0.172 (0.159, 0.186) & 0.318 (0.276, 0.369) \\
PL-MF+DRO            & 0.179 (0.117, 0.287) & 0.172 (0.159, 0.187) & 0.320 (0.277, 0.371) \\
\addlinespace
DRUM (plug-in)       & 0.187 (0.121, 0.311) & 0.199 (0.185, 0.215) & 0.342 (0.299, 0.400) \\
DRUM                 & 0.127 (0.077, 0.225) & 0.146 (0.134, 0.159) & 0.253 (0.214, 0.302) \\
\bottomrule
\multicolumn{4}{@{}p{\textwidth}@{}}{\footnotesize DRUM (plug-in): estimator without bias correction (Algorithm~\ref{alg:local}). DRUM: estimator with bias correction (Algorithm~\ref{alg:proposed-Debiased}).} \\
\end{tabularx}
\end{table}

\begin{table}[!htbp]
\centering
\caption{Classification performance at fixed probability cutoffs across three
OHCA populations. Cutoff $t$: classify $\hat{Y}=1$ if the predicted probability
is at least $t$. 
DRUM (plug-in): estimator without bias correction (Algorithm~\ref{alg:local}). 
DRUM: estimator with bias correction (Algorithm~\ref{alg:proposed-Debiased})}
\label{tab:fixed-cutoff}
\footnotesize
\setlength{\tabcolsep}{2pt}
\renewcommand{\arraystretch}{1.05}

\begin{tabularx}{\textwidth}{
    c
    l
    *{3}{>{\centering\arraybackslash}X}
    *{3}{>{\centering\arraybackslash}X}
    *{3}{>{\centering\arraybackslash}X}
}
\toprule
& &
\multicolumn{3}{c}{Singapore (Target)} &
\multicolumn{3}{c}{Japan} &
\multicolumn{3}{c}{Korea} \\
\cmidrule(lr){3-5}
\cmidrule(lr){6-8}
\cmidrule(lr){9-11}
$t$ & Method
& F1 & Prec & Spec
& F1 & Prec & Spec
& F1 & Prec & Spec \\
\midrule

\multirow{12}{*}{0.03}
& Baseline-ERM
& \textbf{0.084} & \textbf{0.044} & \textbf{0.303}
& \textbf{0.116} & \textbf{0.062} & \textbf{0.543}
& \textbf{0.112} & \textbf{0.059} & 0.316 \\

& Baseline-DRO
& 0.075 & 0.039 & 0.190
& 0.097 & 0.052 & 0.404
& 0.097 & 0.051 & 0.191 \\

& IW-KMM
& 0.080 & 0.042 & 0.239
& 0.104 & 0.055 & 0.461
& 0.102 & 0.054 & 0.241 \\

& IW-Classify
& 0.075 & 0.039 & 0.180
& 0.095 & 0.050 & 0.378
& 0.095 & 0.050 & 0.171 \\

& PL-Mean+ERM
& 0.064 & 0.033 & 0.000
& 0.067 & 0.034 & 0.000
& 0.080 & 0.042 & 0.000 \\

& PL-Mean+DRO
& 0.064 & 0.033 & 0.000
& 0.067 & 0.034 & 0.000
& 0.080 & 0.042 & 0.000 \\

& PL-MICE+ERM
& 0.082 & 0.043 & 0.258
& 0.111 & 0.059 & 0.498
& 0.107 & 0.056 & 0.279 \\

& PL-MICE+DRO
& 0.080 & 0.042 & 0.243
& 0.108 & 0.057 & 0.472
& 0.103 & 0.055 & 0.251 \\

& PL-MF+ERM
& 0.076 & 0.040 & 0.197
& 0.098 & 0.052 & 0.408
& 0.097 & 0.051 & 0.194 \\

& PL-MF+DRO
& 0.075 & 0.039 & 0.180
& 0.094 & 0.050 & 0.379
& 0.095 & 0.050 & 0.174 \\

\addlinespace

& DRUM (plug-in)
& 0.076 & 0.039 & 0.193
& 0.099 & 0.052 & 0.413
& 0.099 & 0.052 & 0.209 \\

& DRUM
& 0.082 & 0.043 & 0.296
& 0.110 & 0.059 & 0.525
& \textbf{0.112} & \textbf{0.059} & \textbf{0.327} \\

\midrule

\multirow{12}{*}{0.05}
& Baseline-ERM
& \textbf{0.100} & \textbf{0.053} & \textbf{0.440}
& \textbf{0.142} & \textbf{0.078} & \textbf{0.682}
& 0.138 & \textbf{0.075} & 0.478 \\

& Baseline-DRO
& 0.088 & 0.046 & 0.332
& 0.122 & 0.066 & 0.580
& 0.118 & 0.063 & 0.358 \\

& IW-KMM
& 0.090 & 0.047 & 0.348
& 0.123 & 0.066 & 0.586
& 0.120 & 0.064 & 0.373 \\

& IW-Classify
& 0.088 & 0.046 & 0.337
& 0.120 & 0.065 & 0.573
& 0.117 & 0.062 & 0.354 \\

& PL-Mean+ERM
& 0.064 & 0.033 & 0.000
& 0.067 & 0.035 & 0.000
& 0.080 & 0.042 & 0.000 \\

& PL-Mean+DRO
& 0.064 & 0.033 & 0.000
& 0.067 & 0.034 & 0.000
& 0.080 & 0.042 & 0.000 \\

& PL-MICE+ERM
& 0.098 & 0.051 & 0.389
& 0.135 & 0.074 & 0.633
& 0.128 & 0.069 & 0.418 \\

& PL-MICE+DRO
& 0.093 & 0.049 & 0.356
& 0.131 & 0.071 & 0.613
& 0.124 & 0.066 & 0.394 \\

& PL-MF+ERM
& 0.090 & 0.047 & 0.353
& 0.124 & 0.067 & 0.593
& 0.121 & 0.064 & 0.376 \\

& PL-MF+DRO
& 0.087 & 0.046 & 0.327
& 0.120 & 0.065 & 0.565
& 0.116 & 0.062 & 0.346 \\

\addlinespace

& DRUM (plug-in)
& 0.090 & 0.047 & 0.328
& 0.122 & 0.066 & 0.573
& 0.118 & 0.063 & 0.353 \\

& DRUM
& 0.091 & 0.048 & 0.429
& 0.133 & 0.073 & 0.674
& \textbf{0.139} & \textbf{0.075} & \textbf{0.509} \\

\bottomrule

\end{tabularx}
\end{table}

\paragraph{Diagnostic: worst-case $A \mid X$ recovered by DRUM generators.}
Since $A$ is structurally absent in the target Asian cohorts, direct validation against the target $A \mid X$ distribution is impossible. We instead examine the behavior of the DRUM generators relative to the source $A \mid X$ distribution.
Figure~\ref{fig:AX_marginal} compares the marginal distributions of each $A$ variable under three scenarios: the source-observed distribution, samples from the DRUM worst-case generator (pooled over $X$), and samples from the source-fit generator. Both generators remain close to the source marginal.

\begin{figure}[!htbp]
    \centering
    \includegraphics[width=\linewidth]{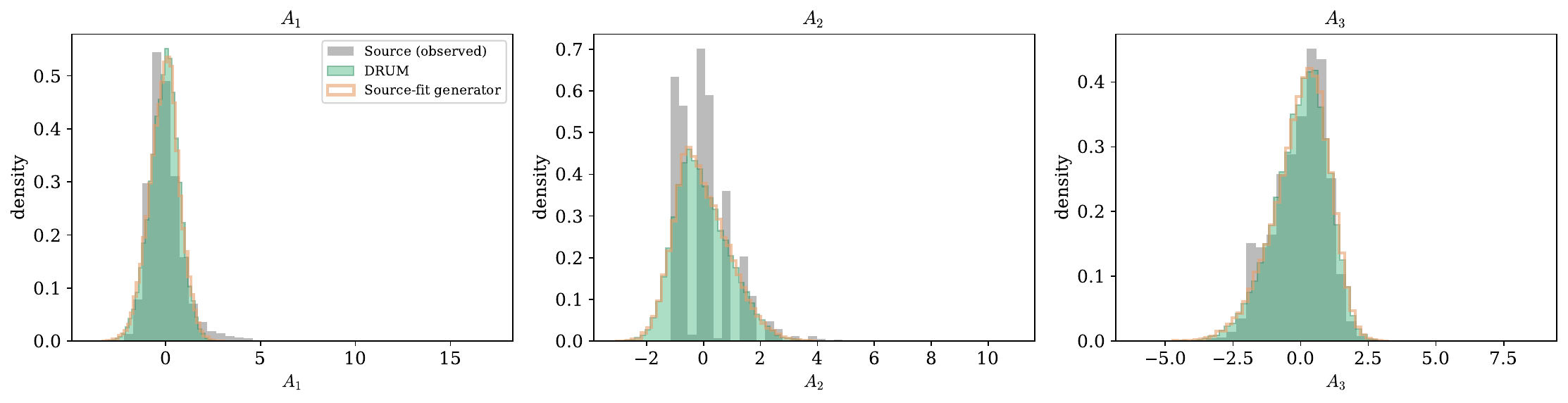}
    \caption{Marginal distributions of each structurally missing covariate $A$ in the source (observed) compared with samples from the DRUM worst-case generator (pooled over $X$) and the source-fit generator. Source $A$ has approximately zero mean and unit standard deviation due to standardization.}
    \label{fig:AX_marginal}
\end{figure}

\paragraph{Local Supervised Baselines (Oracle Reference)}
As an additional reference, we trained local ERM models using data from sites with sufficient sample size. For each site (Singapore, Japan, and Korea), we performed a stratified 80/20 train-test split and trained three standard models on the local training set:
logistic regression (with 5-fold cross-validation over the regularization parameter $C \in \{0.01, 0.1, 1, 10\}$), random forest (with 5-fold cross-validation over tree depth, number of estimators, and minimum leaf size), and a two-layer neural network (with grid search over 50 architecture-learning rate-epoch combinations, selected by validation loss). All models were evaluated on the held-out 20\% local test set with 2{,}000 bootstrap resamples for confidence intervals. Table~\ref{tab:local-baselines} reports the results: the local models achieved similar performance to each other, with neural networks and random forests performing comparably to logistic regression. 
Note that this comparison is asymmetric: DRUM is designed for unsupervised transfer learning setting and therefore the training uses no local labeled data, while the local models are trained on 80\% of local labeled data and evaluated on the remaining 20\%. The comparison is intended to contextualize DRUM's transfer performance relative to models with direct access to local labels.

\begin{table}[!htbp]
\centering
\caption{Performance of local baseline models trained and evaluated on site-specific data (80/20 stratified split). Metrics are reported as point estimate (95\% bootstrap CI). These models use local labeled data for training.}
\label{tab:local-baselines}
\small
\begin{tabular}{llccc}
\toprule
Site & Method & Brier Score & AUROC & AUPRC \\
\midrule
\multirow{3}{*}{Singapore}
& Logistic Regression & 0.0287 (0.0213, 0.0371) & 0.8264 (0.7563, 0.8882) & 0.1628 (0.0941, 0.2846) \\
& Random Forest       & 0.0283 (0.0210, 0.0370) & 0.8183 (0.7407, 0.8836) & 0.1852 (0.1021, 0.3213) \\
& Neural Network      & 0.0284 (0.0211, 0.0369) & 0.8299 (0.7602, 0.8906) & 0.1741 (0.1016, 0.2985) \\
\midrule
\multirow{3}{*}{Japan}
& Logistic Regression & 0.0298 (0.0273, 0.0323) & 0.8607 (0.8443, 0.8772) & 0.2080 (0.1779, 0.2482) \\
& Random Forest       & 0.0296 (0.0271, 0.0322) & 0.8597 (0.8424, 0.8763) & 0.2158 (0.1832, 0.2548) \\
& Neural Network      & 0.0295 (0.0270, 0.0320) & 0.8639 (0.8477, 0.8800) & 0.2151 (0.1841, 0.2558) \\
\midrule
\multirow{3}{*}{Korea}
& Logistic Regression & 0.0338 (0.0273, 0.0410) & 0.8798 (0.8397, 0.9167) & 0.3076 (0.2188, 0.4176) \\
& Random Forest       & 0.0335 (0.0269, 0.0407) & 0.8712 (0.8243, 0.9142) & 0.3079 (0.2188, 0.4169) \\
& Neural Network      & 0.0338 (0.0271, 0.0408) & 0.8788 (0.8333, 0.9191) & 0.3147 (0.2235, 0.4236) \\
\bottomrule
\end{tabular}
\end{table}
 
\end{document}